\documentclass[hyperfootnotes=false]{iopjournal}
\usepackage[utf8]{inputenc}
\usepackage[english]{babel}
\usepackage{lmodern}
\usepackage[T1]{fontenc}
\usepackage{xcolor}
\usepackage{amsmath}
\usepackage{amssymb}
\usepackage{amsthm}
\usepackage{mathtools}
\usepackage{mathrsfs}
\usepackage{soul}
\usepackage{physics}
\usepackage{braket}
\usepackage{mdframed}

\usepackage{setspace}
\usepackage{bbm}
\usepackage{tcolorbox}
\usepackage{graphicx}
\usepackage{appendix}
\usepackage{relsize}
\usepackage{natbib}
\usepackage{algorithmic}
\usepackage{adjustbox}
\usepackage{booktabs}
\usepackage{ulem}
\usepackage{soul}
\usepackage{subcaption}

\newtheorem{definition2}{Lemma}

\newtheorem{Co}{Corollary}

\usepackage{tikz}
\usepackage{lipsum}
\usepackage{xcolor} 
\definecolor{myred}{RGB}{200,0,0}

\begin{document}
\title{Adaptive time Compressed QITE (ACQ) and its geometrical interpretation}

\author{
Alberto Acevedo$^{1,2}$,
Carmen G. Almudéver$^{3}$,
Miguel Angel Garcia-March$^{2}$,
Rafael Gómez-Lurbe$^{4}$,
Luca Ion$^{4}$,
Mohit Lal Bera$^{4,5}$,
Rodrigo M. Sanz$^{3}$,
Somayeh Mehrabankar$^{6}$,
Tanmoy Pandit$^{7,8,9}$,
Armando Pérez$^{4}$,
and Andreu Anglés-Castillo$^{3,*}$
}

\affil{$^1$Departamento de Matemática, Física y Ciencias Tecnológicas, Universidad CEU Cardenal Herrera, València, Spain}

\affil{$^2$Instituto Universitario de Investigación de Matemática Pura y Aplicada, Universitat Politècnica de València, València, Spain}

\affil{$^3$Departament d’Informàtica de Sistemes i Computadors, Universitat Politècnica de València, València, Spain}

\affil{$^4$Departament de Física Teòrica and IFIC, Universitat de València-CSIC, València, Spain}

\affil{$^5$ICFO–Institut de Ciències Fotòniques, The Barcelona Institute of Science and Technology, Av. Carl Friedrich Gauss 3, 08860 Castelldefels (Barcelona), Spain}

\affil{$^6$Queensland Quantum and Advanced Technologies Research Institute, Griffith University, Yuggera Country, Brisbane, QLD 4111, Australia}

\affil{$^7$VTT Technical Research Centre of Finland, Tietotie~3, Espoo, Finland}

\affil{$^8$Institute for Theoretical Physics, Leibniz Institute of Hannover, Hannover, Germany}

\affil{$^9$Institute for Physics and Astronomy, TU Berlin, Germany}

\affil{$^*$Author to whom any correspondence should be addressed.}

\email{aangcas@upv.es}

\begin{abstract}
    Imaginary Time Evolution (ITE) is a well-established method for ground-state preparation, a fundamental problem in many fields such as materials science, chemistry, and optimization. Quantum Imaginary Time Evolution (QITE) approximates this evolution on quantum hardware but suffers from high circuit depth and numerous measurements. In this work we introduce Adaptive-time Compressed QITE (ACQ), a novel algorithm that reduces resource-cost by combining adaptive time steps with circuit compression. This approach leverages geometric insights by characterizing its relationship to geodesic trajectories with a measure that distinguishes trajectories in $\mathbb{CP}^N$. Recalling that ITE is a gradient flow on the complex projective plane $\mathbb{CP}^N$, such trajectory measures allow one to measure the deviation from geodesicity of said flow. For Hamiltonians with only two distinct eigenvalues (spectral cardiality), ITE and QITE exactly trace geodesics, this fact motivates an adaptive strategy for systems whose corresponding spectral cardinality is greater than 2,  where QITE unitaries are reused until an energy increase signals departure from the ITE path. This is implemented via a line search for energy minimization. Circuit compression is achieved by approximating the sequence of QITE unitaries with a single element of a one-parameter group. Numerical simulations on the Transverse Field Ising Model and the Heisenberg model demonstrate that ACQ achieves comparable fidelity to standard QITE while significantly reducing the number of QITE optimizations and maintaining fixed circuit depth during propagation. Gate-count estimates and an analysis of the fidelity scaling  with truncation parameters are provided. A gate count and performace comparison with the state of the art method Double Bracket QITE is also performed.
\end{abstract}

\section{Introduction}\label{sec:intro}

Ground state preparation is one  of the most important applications of quantum computers.  Many proposals aim at taking advantage of quantum mechanics  to find ground states in  various fields, e.g.  quantum simulators in material science~\cite{cirac2012goals,bernien2017probing}, problems in quantum chemistry~\cite{mcardle2020quantum}, and mapping optimization problems in Hamiltonians attainable for quantum computers~\cite{Lucas2014,Hadfield2021}.  While approximate ground state preparation with classical methods~\cite{NAKATANI2018,RevModPhys.93.045003,suzuki1976generalized} has been extensively explored, there exist proposals of quantum algorithms that may present an advantage.  These include quantum-classical variational algorithms such as quantum approximate optimization algorithm~\cite{farhi2014quantum,farhi2014quantumMaxCut} and the variational quantum eigensolver~\cite{peruzzo2014variational}, that are proposals with reduced quantum resources targeted for the NISQ era~\cite{preskill2018quantum}, or adiabatic state preparation~\cite{farhi2000}. 

The focal point of this paper is Imaginary Time Evolution (ITE) \cite{suzuki1976generalized} methods and how they can be implemented on quantum hardware. ITE is one of a variety of so-called dynamical optimization methods. Even though ITE produces globally non-linear dynamics, it preserves the norm and smoothness at every instance of time; hence it is locally characterized by linear unitary evolution. This is the basis of Quantum Imaginary Time Evolution (QITE)~\cite{mott} methods that approximate ITE with a family of unitary processes to arbitrary precision and the basis that motivates optimization strategies here presented. 
Many proposals that optimize the performance and resource cost of QITE are presented in Section~\ref{sec:QITEopts}. In this paper, we introduce in Section \ref{sec:adapQITE} an Adative-time Compressed QITE (ACQ) that finds the optimal time for energy reduction of the family of QITE unitaries by a line search. This optimization reduces the number of measurements required for convergence, while a Trotter compression of these unitaries is performed to reduce the resulting circuit depth.

This paper is structured as follows. In the rest of Section \ref{sec:intro} the preliminaries for ITE methods are presented, the original QITE proposal for its implementation in a quantum computer, and the alternatives in the literature that optimize it. 
In Sec.~\ref{sec:geom} the geometric properties of ITE and QITE are reviewed. To better understand the underlying structure of ITE, one may also compare its dynamics to geodesic evolution in  $\mathbb{CP}^N$ by defining a geometrically-motivated distinguishability measure, which shall also be used to compare the deviation of QITE (and ACQ) from ITE.  This geometric treatment serves as motivation for the novel algorithm presented in Sec.~\ref{sec:adapQITE}. In Sec.~\ref{sec:results} numerical results that support the effectiveness of this algorithm are presented with concluding remarks being left for Sec.~\ref{sec:conclusion}.



\subsection{Imaginary Time Evolution}\label{subsec:ITE}
By performing a Wick-Rotation \cite{wick1954properties} on unitary evolution, generated by a Hamiltonian $\boldsymbol{\hat{H}}$, one gets the so-called imaginary time $\tau = i t$ evolution ($\tau$ is assumed to be real forcing $t$ to be imaginary; hence the name). One can evolve an initial state $|\psi_{0}\rangle$ as follows
\begin{equation}
\label{eqn:ite}
    \ket{\psi(\tau)}= e^{-\tau\boldsymbol{\hat{H}} } \ket{\psi_0}= \sum_n c_n e^{-\tau E_n } \ket{E_n}~,
\end{equation}
where the initial state has been decomposed in the eigenbasis of $\boldsymbol{\hat{H}}$ with $c_n:=\braket{E_n|\psi_0}$. One can see that, at long times, the exponential with the smallest $E_n$ dominates the above expansion. For that reason, if the initial state has some support with the ground state $\braket{E_0|\psi_0}\neq0$, at long times $\ket{\psi(\tau)}$ will be approximately on the same ray as $\ket{E_0}$, up to a normalization factor. To keep the evolved state pure and ensure numerical stability in computational scenarios, ITE is defined with the following normalization factor
\begin{equation}\label{eq:ITE}
    \ket{\psi(\tau)}= \frac{e^{-\tau\boldsymbol{\hat{H}}}\ket{\psi_0}}{\norm{e^{-\tau\boldsymbol{\hat{H}}}\ket{\psi_0}} }~,
\end{equation}
where $\norm{\cdot}$ denotes the 2-norm in Hilbert space.
Equation~\eqref{eq:ITE} is a solution to the Wick-Schrödinger equation
\begin{equation}\label{eq:sch-wick}
\partial_{\tau}|\psi(\tau)\rangle = -(\boldsymbol{\hat{H}}-E(\tau))|\psi(\tau)\rangle~,
\end{equation}
where $E(\tau) := \braket{\psi(\tau)|\boldsymbol{\hat{H}}|\psi(\tau)}$ is the expectation value of the energy. 
The evolution  in Eq.~\eqref{eq:ITE} can be split into $n$ discrete time steps of length $\Delta \tau$
\begin{equation}
\ket{\psi(n\Delta\tau)}=\frac{e^{-n\Delta\tau \boldsymbol{\hat{H}}}\ket{\psi_0}}{\norm{e^{-n\Delta\tau \boldsymbol{\hat{H}}}\ket{\psi_0}} }~,
\end{equation}
such that the state $\ket{\psi_n}\equiv\ket{\psi(n\Delta \tau)}$ at step $n$ is related to the forward step as
\begin{equation}\label{eq:ITE_step}
    \ket{\psi_{n+1}}=\frac{e^{-\Delta \tau\boldsymbol{\hat{H}} }\ket{\psi_n}}{\norm{e^{-\Delta \tau\boldsymbol{\hat{H}} }\ket{\psi_n}}}~.
\end{equation}
Here, a fixed time-step $\Delta\tau$ has been utilized . In what is to come, the problem of finding adaptive time-steps $\Delta\tau_{n}$ with the goal of reducing iteration count shall be explored. 

\subsection{Quantum Imaginary Time Evolution}\label{sec:QITE}

The map $M_{\tau}(\cdot):=\frac{e^{-\tau\boldsymbol{\hat{H}}}(\cdot)}{\norm{e^{-\tau\boldsymbol{\hat{H}}}(\cdot)} }$ mapping quantum states to quantum states is non-unitary and non-linear; however, quantum computers directly implement unitary evolution and projective measurements; 
non-unitary evolution must be simulated through sequences of unitary gates. The goal of QITE is to approximate such non-unitary dynamics with unitary operations apt to be executed in a quantum computer. The seminal QITE proposal by Motta et al.  \cite{Motta2019DeterminingEA} introduced a hybrid quantum-classical 
approach consisting on finding unitary operators $\boldsymbol{\hat{U}}_n=e^{-i \boldsymbol{\hat{A}}_n \Delta\tau}$ that closely approximate the time step evolution generated by the operator of Eq.~\eqref{eq:ITE_step}. To this aim, the unitary generator $\boldsymbol{\hat{A}}_n$ can be expressed in terms of an operator basis
\begin{equation}\label{eq:generatorQITE}
\boldsymbol{\hat{A}}_n=\sum_I a_I(n) \boldsymbol{
    \hat{\sigma}}_I~,
\end{equation}
where $\boldsymbol{\hat{\sigma}}_I$ is the operator basis, which, for instance, for spin systems would be composed of Pauli strings.\footnote{Unless otherwise specified, this work will be reviewing spin chain systems with finite correlation lengths, but the same arguments apply to other kinds of many particle systems.} In \cite{Motta2019DeterminingEA} the problem is translated into an optimization where the $a_I(n)$ coefficients are found by minimizing the state norm difference between the state evolved as in Eq.~\eqref{eq:ITE_step} and evolved with the unitary $\boldsymbol{\hat{U}}_n$; the precision of these algorithms of course depend on the time-steps $\Delta\tau$ being small enough. See \cite{understandingQITE} for a pedagogical and detailed review of the procedure followed by QITE to generate the unitaries $\boldsymbol{\hat{U}}_n$.

While this procedure may yield dynamics that are arbitrarily close to ITE, the size of the linear system of equations to solve increases exponentially with the size of the system. 
To tackle this problem, the Hamiltonian is divided into $K$-local terms 
\begin{equation}\label{eq:HamPieces}
    \boldsymbol{\hat{H}} = \sum_{k=1}^{N_K}\boldsymbol{\hat{h}}_k~,
\end{equation}
where each $\boldsymbol{\hat{h}}_k$ acts on at most $K$ neighbouring particles (qubits), so that the imaginary evolution may be Trotterized as follows
\begin{equation}
    e^{-\tau\boldsymbol{\hat{H}} } = \left( \prod_k e^{-\Delta \tau\boldsymbol{\hat{h}}_k } \right)^n+O(n \Delta\tau^2)~,
\end{equation}
where the evolution is divided into discrete time steps as before $\tau=n\Delta \tau$. Next, unitaries $\boldsymbol{\hat{U}}_{n,k}:=e^{-i\Delta \tau \boldsymbol{\hat{A}}_{n,k}}$ approximating the ITE generated by the individual Hamiltonian pieces,
\begin{equation}\label{eq:QITE_cond}
    e^{-i\Delta\tau \boldsymbol{\hat{A}}_{n,k}}\ket{\psi_n}\sim \frac{e^{-\Delta\tau\boldsymbol{\hat{h}}_k}\ket{\psi_n}}{\norm{e^{-\Delta\tau\boldsymbol{\hat{h}}_k}\ket{\psi_n}}}~,
\end{equation}
need to be found.

If $K$ is fixed at some small value, then the exponential on the right can be easily computed; however, owing to the fact that this operation introduces correlations beyond the $K$ neighbouring particles, the unitary operation would need to be computed in a larger domain in order to effectively capture all of the correlations\footnote{Recall that ITE is a non-unitary evolution, and even though $\boldsymbol{\hat h}_k$ acts locally on $K$ qubits, the r.h.s. of Eq.~\eqref{eq:QITE_cond} does not necessarily only act on it.}.
At most, the unitary would act on the whole domain of $\boldsymbol{\hat{H}}$ which, for large systems, is an unfeasible task, whence a truncation of the domain that $\boldsymbol{\hat{A}}_{n,k}$ acts on is introduced. 
Let the domain of $\boldsymbol{\hat{U}}_{n,k}$, where it acts not trivially, be labelled as $D$; $D \geq K$ represents the number of qubits on which the unitary acts non trivially.
The computational cost scales exponentially with $D$.
Ideally, $D$ should be truncated to become the value of the correlation distance of the system of interest, see Fig.~1 in \cite{Motta2019DeterminingEA} for further details; a deeper discussion regarding the parameter $D$ is left for section \ref{geosec} where the underlying geometric notions vital to this parameter will be further motivated and developed. 
Additionally, the Trotterization allows one to synthetize the $K$-local unitaries $\boldsymbol{\hat U}_{n,k}$ into a much shallower circuit than the one that would be obtained for the previous non-local unitary $\boldsymbol{\hat U}_n$.

\subsection{QITE optimizations}\label{sec:QITEopts}
There are various proposals for tackling QITE's shortcomings in practical scenarios. For instance, reducing the circuit-depth of generated circuits with a reverse Suzuki-Trotter decomposition, (i.e. compression) can be performed on all the unitaries generated by QITE. The methods that make use of this optimization are referred to as compressed QITE (cQITE) \cite{Nishi2021} or step-merged QITE (smQITE) \cite{Gomes2020}. Each step of QITE corresponds to a parameter update of the circuit, however, the measurement of expectation values (runtime cost) at each time step and the solution of the linear system that minimizes the difference of terms in Eq.~\eqref{eq:QITE_cond} (classical cost) still needs to be performed at each step. 

An equivalent formulation of Eq.~\eqref{eq:sch-wick} is given by
\begin{equation}\label{eq:DB-ITE-ODE}
    \partial_\tau \ket{\psi(\tau)} = [\boldsymbol{\hat{\rho}}(\tau),\boldsymbol{\hat{H}}] \ket{\psi(\tau)}
\end{equation}
where $\boldsymbol{\hat{\rho}}(\tau)=\ket{\psi(\tau)}\bra{\psi(\tau)}$, which for small time steps can be rewriten as the recursion
\begin{equation}\label{eq:DB-ITE}
    \ket{\psi_{n+1}} \approx e^{\Delta \tau [\boldsymbol{\hat{\rho}}_n,\boldsymbol{\hat{H}}]}\ket{\psi_n}~,
\end{equation}
where $\boldsymbol{\hat \rho}_n=\ket{\psi_n}\bra{\psi_n}$, which is by itself a unitary evolution\footnote{Notice that the commutator of two Hermitian operator is anti-Hermitian, so that the exponential is unitary. Also note that the unitary generator depends non-linearly with the state it is evolving.} that approximates ITE. This fact is exploited in \cite{glu} to synthesize a circuit that implements this version of QITE, which is called Double Bracket QITE (DB-QITE). The benefits of this method are that it does not rely on intermediate measurements \cite{Iqbal2024topological} to compute the unitary of the following step, it is a pure quantum algorithm. This is achieved by nesting the unitaries used to evolve one step into the unitaries used to evolve to the next one, 
which makes the gate count to increase exponentially with the number of steps in the algorithm. We explore the resource cost of this novel algorithm in the following and compare it with the proposal presented in this paper.

A variational formulation of imaginary-time evolution based on McLachlan’s variational principle was proposed as an alternative approach to approximating the imaginary-time-evolved quantum state in Eq.~\eqref{eq:ITE}~\cite{mcardle2019variational}. Rather than approximating each time step by an explicit unitary, as in QITE, Variational Quantum Imaginary Time Evolution (VarQITE) projects the imaginary-time dynamics onto a parameterized manifold of quantum states defined by a variational ansatz $\ket{\phi(\boldsymbol{\theta})}$. In this framework, the evolution of the quantum state is mapped onto the evolution of the variational parameters.

This variational formulation reveals a direct geometric connection between VarQITE and the Quantum Natural Gradient (QNG) method~\cite{Stokes2020quantumnatural}, whereby VarQITE can be interpreted as a natural-gradient descent on the energy landscape of the variational manifold. The QNG was originally introduced in the context of variational quantum algorithms to improve the optimization of parametrized quantum circuits by incorporating geometric information about the underlying Hilbert space. Numerical studies have demonstrated that QNG-based methods can accelerate convergence~\cite{Stokes2020quantumnatural}, help avoid certain local minima~\cite{wierichs2020avoiding}, and exhibit strong robustness to random initializations~\cite{62wx-tvk5}. Moreover, these advantages have been shown to persist even in the presence of noise~\cite{62wx-tvk5,koczor2022quantum}.

Despite their promise, both QNG and VarQITE are often limited by the substantial quantum and classical resources required to compute the Quantum Geometric Tensor. To mitigate this computational overhead, several resource-efficient estimation strategies have been proposed~\cite{gomezlurbe2025efficientprotocolestimatequantum,PhysRevA.111.012424}. In addition, VarQITE can suffer from the barren plateau phenomenon~\cite{McClean_2018,Larocca_2025}, in which the cost-function landscape becomes exponentially flat as the system size or circuit depth increases. In such regimes, gradients with respect to the variational parameters vanish on average, leading to extremely small parameter updates and causing the optimization to stall.

Truncating the QITE unitaries by restricting them to a domain $D$ is a generic truncation procedure. There are some problem specific optimizations that can be exploited to obtain truncations tailored to specific problems. For instance, in combinatorial problems \cite{Alam2023,PhysRevA.109.052430} a separable ansatz for Eq.~\eqref{eq:generatorQITE} is considered, since the problem is mapped to a Hamiltonian with a separable ground state, which allows to greatly reduce the problem size. In \cite{MolecularImprov}, where the ground states of molecular systems is pursued, the construction of $\boldsymbol{\hat{A}}_n$ is performed by considering only anti-Hermitian fermionic operators and truncating the higher order ones, which are known to have small contributions in these systems. For combinatorial problems, a hybrid VQE-cQITE has been proposed in \cite{xie2025adaptiveweightedqitevqealgorithm} to overcome the shortcoming of each procedure by a weighted combination of the update rule of each method.

In this work an algorithm that simultaneously reduces the circuit depth of the circuits used to approximate the ITE protocol and decreases the number of calls to the QITE subroutine during the optimization process is proposed. With such an algorithm one may also lower the classical computational overhead by reducing the total number of expectation values that need to be evaluated for energy minimization with respect to the one-parameter unitary group generated in each QITE step. In the following, a review of the geometric properties of ITE methods that motivate it is presented.


\section{Geometric Background and Motivation}\label{sec:geom}

Although ITE, described in Eq.~\eqref{eq:ITE}, produces non-unitary dynamics, the normalization in the denominator keeps states in the space of pure states, i.e. $\mathbb{CP}^N$. Furthermore, it has been shown in multiple references \cite{SciPostPhys.9.4.048,na} that ITE is a Riemannian gradient descent flow pertaining to the minimization of the functional $E(\psi)=\braket{\psi|\boldsymbol{\hat{H}}|\psi}$ over the complex projective plane. These trajectories are steepest descent trajectories and converge to the ground state, which is the critical point in $\mathbb{CP}^{N}$ corresponding to the global and only minimum of the functional $E(\psi)$. The rest of the critical points are the excited energy subspaces of the Hamiltonian and correspond to relative maxima \cite{SciPostPhys.9.4.048,na}. 

The distance between states in $\mathbb{CP}^N$ is the intrinsic metric known as the Fubini-Study distance
\begin{equation}\label{eq:fubini_dist}
	d_{FS}(|\psi\rangle, |\phi\rangle) := \arccos(|\langle \phi|\psi\rangle|)~.
\end{equation}
Note that, by definition, the distance between any two points is equal to the length of the geodesic that connects them, which is, indeed, the minimum achievable by any path between them.
Geodesics on  $\mathbb{CP}^N$ between any two points minimize this distance. 
In general these paths are given by 1-parameter unitary groups, but different parametrizations of the geodesics exist. 

It is worth noting that the path followed by ITE for the case where the Hamiltonian in question has spectral cardinality 2 is that of a geodesic \cite{arxivgluzagrover} between the initial state and the ground state; for higher spectral cardinality Hamiltonians this is no longer the case. These notions are developed further in Appendix~\ref{sec:app_geometric}. Furthermore, in Appendix~\ref{sec:1qbit_geo_QITE} it is shown that QITE also follows a geodesic trajectory for the case of 1 qubit. One might intuit at this point that the departure of the (Q)ITE trajectory from a geodesic path might be greater for higher dimensional systems. To assess this, in what follows,  a measure that precisely quantifies how far apart are two different trajectories in $\mathbb{CP}^N$ is introduced.

\subsection{Distingushing trajectories on \texorpdfstring{$\mathbb{CP}^{N}$}{CP N}}
\label{sec:geom1}


Let $\Gamma(\mathbb{CP}^{N})$ be the set of all smooth trajectories on $\mathbb{CP}^{N}$. Next, let us equip this set with the following distinguishability measure. 
Let $S_1$ and $S_2$ be two sets constituting paths in $\mathbb{CP}^{N}$ and let $\gamma_{1}(\tau)$, $\gamma_{2}(\tau)$, $\tau \in [0,T]$ be parametrizations of the paths $S_{1}$ and $S_{2}$ respectively . 
We then define,
\begin{align}
\label{eq:traj_distance} 
D_f(S_1,S_2) & =\int_{0}^{T}\arccos\big(|\langle\gamma_{1}(\tau),\gamma_{2}(\tau)\rangle|\big)d\tau ~.
\end{align}
This integral measures the totality of differences, in the sense of the intrinsic  Fubini-Study metric, between points in trajectory $S_{1}$ and those of $S_{2}$. Indeed, this integral will be zero if and only if the trajectories $S_1$ and $S_2$ follow the same path and move at the same speed.

\subsection{Velocity Measure}
\label{sec:geom2}
Since the trajectories $\gamma_{1}(\tau)$ and $\gamma_{2}(\tau)$ are not only different in general but also move at different velocities, a more complete analysis would involve studying how similar the velocities of these two trajectories are; velocity here is meant to be interpreted here as the generator of local-unitary dynamics at any given time $\tau$, i.e. the effective Hamiltonian of the dynamics at said time. Whence the following velocity measure follows.

\begin{equation}
     \mathcal{D}_{v}(S_1,S_2)  := \Big(\int_{0}^{T}
\big\lVert \dot\gamma_{1}(\tau)-\dot\gamma_{2}(\tau)\big\rVert_{\mathrm{FS}}^{2}\,d\tau\Big)^{1/2}~,
\end{equation}
where the Fubini-Study speed is defined as follows
\begin{equation}
\big\lVert \dot\gamma(\tau)\big\rVert_{\mathrm{FS}}
:=\sqrt{\,
\langle \dot{\gamma}(\tau)| \dot{\gamma}(\tau)\rangle
\;-\;
\big|\langle \gamma(\tau)| \dot{\gamma}(\tau)\rangle\big|^{2}
\,}~,
\end{equation}
where $|\gamma(\tau)\rangle$ is a lift
of the trajectory $\gamma(\tau)$ at time $t$, e.g.
if the evolution is unitary, whence
\begin{align}\label{eq: FS velocity}
\partial_{\tau}|\gamma(\tau)\rangle &= -i\,H\,|\gamma(\tau)\rangle~, \\
\big\lVert \partial_{\tau}|\gamma(\tau)\rangle\big\rVert_{\mathrm{FS}}
&= \sqrt{\langle H^{2}\rangle_\tau - \langle H\rangle_\tau^{2}}
= \Delta H_{\tau}~.
\end{align}
as expected.
For the case where trajectories $\gamma_{1}(\tau)$ and $\gamma_{2}(\tau)$ are equal at $\tau=0$, it can be shown that $\mathcal{D}_{v}(S_{1},S_{2})=0$ if and only if the curves are the same. In general it will be more challenging to calculate the velocity distance $\mathcal{D}_{v}$. However, this distance has a lower bound that is very simple to estimate. Namely the following, 
\begin{align}
\label{eqn:speed}
\mathcal{D}_{v}(S_1,S_2) &:= \bigg(\int_{0}^{T}
\big\lVert  \dot\gamma_{1}(\tau)-\dot\gamma_{2}(\tau)\big\rVert_{\mathrm{FS}}^{2}\,d\tau\bigg)^{1/2}\geq
\frac{|L_{1}-L_{2}|}{\sqrt{T}}~,\\
\label{eq:length} 
L(S) & = \int_0^T  \big\lVert \dot\gamma(\tau)\big\rVert_{\mathrm{FS}} d\tau~,
\end{align}
where $L_{1}$ and $L_{2}$ are the lengths (on $\mathbb{CP}^{N}$) of the trajectories $\gamma_{1}(\tau)$ and $\gamma_{2}(\tau)$ respectively, and can be expressed in terms of Eq.(\ref{eq:length}). One may therefore utilize this bound to see how the divergence between trajectories $S_{1}$ and $S_{2}$ (e.g., ITE and ACQ) depends on the dimensionality of the system, in a purely geometric sense. It is worth noting here that for gradient flows and respective approximate line search methods (ITE and ACQ for example), the respective trajectory lengths coincide if and only if the line search is exactly the gradient flow. Hence, the measure $|L_{1}-L_{2}|$ is a relevant measure of difference in its own right even outside of the context of the measures $\mathcal{D}_{f}(S_{1},S_{2})$ and $\mathcal{D}_{v}(S_{1}, S_{2})$. 

To showcase the behaviour of the measure defined in Eq.~\eqref{eqn:speed} let us first consider the Hamiltonian of the Transverse Field Ising Model (TFIM)
\begin{equation}\label{eq:TFIM}
    \boldsymbol{\hat{H}} = \sum_{j=1}^N J\boldsymbol{\hat{\sigma}}^z_j \boldsymbol{\hat{\sigma}}^z_{j+1} +\sum_{j=1}^N h \boldsymbol{\hat{\sigma}}^x_j~,
\end{equation}
where $J$ is the nearest neighbour interaction strength and $h$ the external field strength.
In the left panel of Fig.~\ref{fig:dist_ITE_GEO} one may appreciate how this distance measure deviates when comparing the ITE trajectory with a geodesic one (connecting the same initial state and the ground state of $H$). It can be seen in green that the distance increases with system size $N$, indicating that the ITE trajectory deviates more from the geodesic path for larger systems. This distance would be exactly zero for 1 qubit systems, but the Hamiltonian $\eqref{eq:TFIM}$ is ill-defined for $N=1$. The path length of ITE and the geodesic is also shown for illustrative purposes. While the geodesic length asymptotically reaches the maximum distance in $\mathbb{CP}^{2^N}$ which is $\pi/2$, the ITE trajectory length seems to keep increasing with $N$. A naive explanation for this behaviour can be inferred from the bounds on the ITE convergence time $\tau_{\text{conv}}$ derived in Appendix \ref{sec:app_Length}. In particular, the lower-bound scaling of the expectation value, $\mathbb{E}(\tau_{\mathrm{conv}}) \in \Omega\left( \frac{N}{E_n - E_0} \right)$, suggests that increasing the number of qubits generally slows down the convergence, which can lead to longer trajectory lengths.

\begin{figure}
    \centering
    \includegraphics[width=\linewidth]{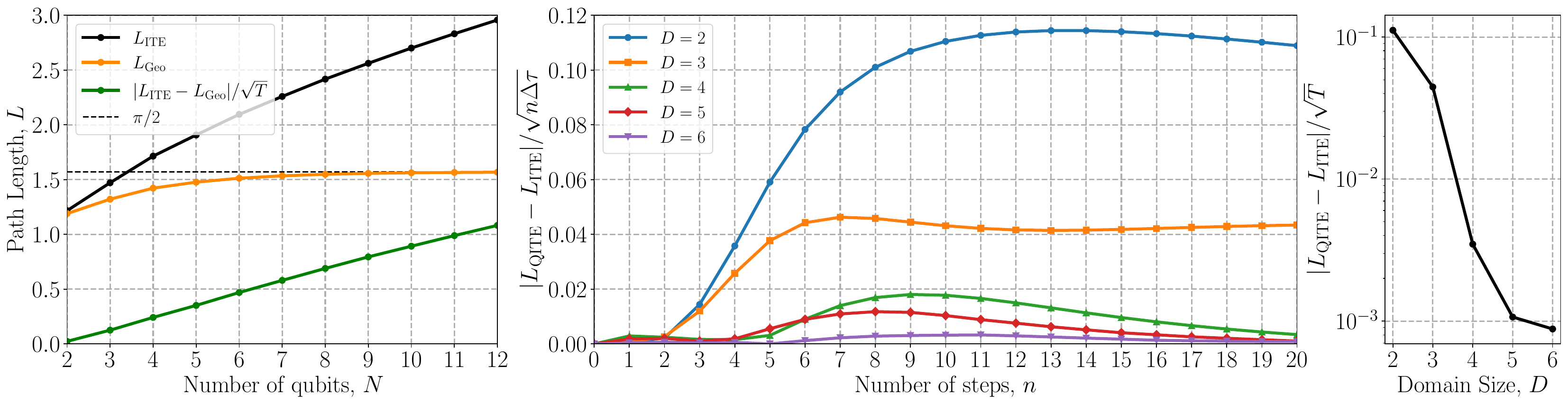}
    \caption{\textbf{Left:} Upper bound of the trajectory distance $\mathcal{D}_v(S_\mathrm{Geo}, S_\mathrm{ITE})$ defined in Eq.~\eqref{eqn:speed} in green between the ITE trajectory and the geodesic connecting the initial state $\ket{\psi_0}=\ket{0}^{\otimes N}$ and the ground state $\ket{E_0}$ of the Transverse Field Ising Model (TFIM) with parameters $J = 0.5$ and $h = 1$. The path length of the geodesic and ITE path are respectively plotted in orange and black; note that the geodesic path length approaches asymptotically the value $\pi/2$ for high $N$.
    \textbf{Centre:} Path distance between QITE and ITE at intermediate points of the evolution for the TFIM with the same parameters and $N=8$ qubits. After 20 steps the state generated by (Q)ITE is already very close to the ground state. 
    \textbf{Right:} Path distance between QITE and ITE for $N=8$ qubits and different values of the domain size $D$.}
    \label{fig:dist_ITE_GEO}
\end{figure}

\subsection{Dependence on the Locality Parameter \texorpdfstring{$D$}{D}}
\label{geosec}
When the locality of the operators $\boldsymbol{\hat{A}}_{n,k}$ defined in Eq.~\eqref{eq:QITE_cond} is restricted, one shifts from the problem of estimating a gradient descent flow using unitary operators generated by time–independent generators $\boldsymbol{\hat{A}}_{n,k}$ with arbitrary support, to one where their support is assumed to be of finite length $D$; i.e.\ these generators may only act on up to $D$ qubits at a time.

From the geometric perspective developed previously, unrestricted QITE corresponds to approximately following the Riemannian gradient flow of the energy functional
\begin{equation}
    E(\psi)=\langle\psi|\boldsymbol{\hat{H}}|\psi\rangle~,
\end{equation}
on the manifold $\mathbb{CP}^N$. Restricting the domain size to $D$ therefore has the following interpretation. Instead of evolving along the full gradient vector field on $\mathbb{CP}^N$, QITE follows the \emph{projection} of the gradient onto the submanifold of states reachable from $\ket{\psi_n}$ by $D$–local unitaries. In this sense, the parameter $D$ determines the dimension of the accessible tangent subspace at each step, and hence acts as a geometric resolution parameter for how well the exact ITE direction is approximated.

The small–time–step imaginary–time evolution that the generator $\boldsymbol{\hat{A}}_{n,k}$ approximates is in general not local. Indeed, the right–hand side of Eq.~\eqref{eq:QITE_cond} has an effective generator which is non–linear in both time and $\ket{\psi_n}$ and, for small $\Delta\tau$, is approximately of the form
\[
[\boldsymbol{\hat{h}}_k, \ket{\psi_n}\!\bra{\psi_n}]~.
\]
For this commutator to preserve the locality of $\boldsymbol{\hat{h}}_k$, the state $\ket{\psi_n}\!\bra{\psi_n}$ itself would need to be local, which is not generally true. As imaginary time evolves, correlations spread through the system and the projector $\ket{\psi_n}\!\bra{\psi_n}$ becomes increasingly non–local, even when $\boldsymbol{\hat{h}}_k$ is strictly local. The non-locality of $\ket{\psi_n}\!\bra{\psi_n}$ is quantified in terms of correlation length; in particular one says that a state $\ket{\psi}$ has correlation length $\xi$ if its associated correlation functions satisfy
$\big|\langle \boldsymbol{\hat{O}}_X \boldsymbol{\hat{O}}_Y \rangle - \langle \boldsymbol{\hat{O}}_X \rangle\langle \boldsymbol{\hat{O}}_Y \rangle\big|
\le C e^{-\mathrm{dist}(X,Y)/\xi}$ for any local observables $\boldsymbol{\hat{O}}_X,\boldsymbol{\hat{O}}_Y$. Here, it is assumed that the observables are supported over the subset of some lattice spins, namely $X$  and $Y$, labelling which subset of spins they act on non-trivially. For a spin-chain the lattice is of course just $\mathbb{Z}_{N}$ and $X$, $Y \subset \mathbb{Z}_{N}$. This is indeed a strong assumption and such bounds on the correlations functions are not in general guaranteed, this problem is partially addressed in \cite{Matt}.

Under the finite–correlation–length assumption of \cite{Motta2019DeterminingEA}, the operator $[\boldsymbol{\hat{h}}_k, \ket{\psi_n}\!\bra{\psi_n}]$ is quasi–local, in the sense that its expansion in Pauli strings contains coefficients that decay exponentially with the size of the support, i.e. string-length. In that regime, increasing $D$ improves the approximation systematically by capturing a larger fraction of the tangent space where the gradient lives. In particular, one expects good performance whenever $D$ exceeds the instantaneous correlation length of $\ket{\psi_n}$. 

With these considerations in mind, it is worthwhile to study how efficient ACQ and QITE respectively are when the locality parameter $D$ is varied. Since deducing generic analytic relationships between $\Delta\tau$ and $D$ guaranteeing specified constraints on computational resources and/or error tolerance is in general untenable, instead, a numerical analysis is performed. In the central and right panels of Fig.~\ref{fig:dist_ITE_GEO} the convergence of QITE with respect to ITE employing measure \eqref{eqn:speed} for different values of the domain parameter $D$ is studied. Furthermore the intermediate times distance between trajectories (central panel) and the final trajectory distance (right panel) are also compared. As can readily be seen the trajectory distance is greatly decreased when $D$ is high, as already anticipated in this section.

\section{Adaptive Compressed QITE (ACQ)}\label{sec:adapQITE}

\subsection{Motivation}

In the previous section, the geometric aspects of ITE were reviewed; particularly how the path followed by this evolution differs from geodesic paths as the dimension of Hilbert the space grows. An example of the latter is provided in Fig.~\ref{fig:dist_ITE_GEO}.
It was pointed out that only when the Hamiltonian of interest has spectral cardinality 2, the path that ITE and QITE generates is a geodesic path. When moving to Hamiltonians with higher spectral cardinality, one can employ measure \eqref{eq:traj_distance} to quantify how much ITE deviates from the geodesic path. Note that if the time step of QITE is sufficiently small, its path closely reproduces the one from ITE. These features are sketched in Fig.~\ref{fig:trajectories}.

Let us for a moment assume that the ITE path does not deviate much from a geodesic path. In that case, the QITE routine would generate unitaries whose directions do not change much from one step to another, and it would be redundant to keep computing new unitaries. Whence, it would be more efficient to reapply the same unitary multiple times, and in this case the resulting state would deviate a small bit after a few steps. At some point, the deviation from the intended ITE path would be so great that one would not be evolving towards the ground state, at which point one could repeat the same procedure: compute the QITE unitary and propagate it again until one deviates from the ITE path or stop if one gets close to the target state, but a criterion is needed to decide this. 

Since the energy along ITE always decreases, an energy increase of a path that attempts to reproduce ITE would be a signature of devitation from ITE. If the expectation value of the Hamiltonian along an iterated QITE unitary increases after some iteration, it would indicate a deviation from the gradient descent. 
This new method is depicted in Fig.~\ref{fig:trajectories} by a red path, where each red arrow represents an execution of the QITE routine and the dashed red line the iteration of the resulting unitary until energy increases. This method would suppose less classical overhead and also require fewer measurements, but it would require the application of more unitaries than with QITE. This problem is addressed in the following section.

The case of spectral cardinaliy 2 is a special case in which the geodesic interpretation of ITE is exact, as discussed
in Sec.~\ref{sec:geom} and Appendix \ref{sec:app_geometric}. For Hamiltonians with higher spectral cardinality, the ITE trajectory is
not generally geodesic; this is precisely why the trajectory measures
introduced in Secs.~\ref{sec:geom1} and \ref{sec:geom2} are needed. Furthermore, Appendix \ref{sec:app_Length} provides upper and lower bounds for the ITE-convergence time, suggesting a modest scaling of  $\mathcal{D}_{v}(S_1,S_2)$ in Eq.~\eqref{eqn:speed} with respect to $N$; thus providing a more generic analysis of the ACQ performance for the case of Hamiltonians of arbitrary spectral cardinality.  Finally, it is worth noting that the geodesic
intuition is used locally: ACQ follows a QITE descent direction only while the energy decreases, and recomputes the QITE generator once this direction ceases to provide further energy reduction.

\begin{figure}
    \centering
    \includegraphics[width=0.5\linewidth,trim=-10 -10 -10 -10]{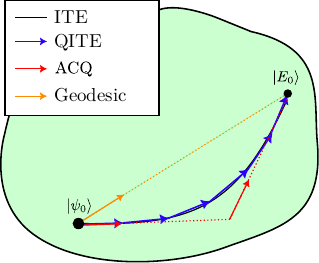}
    \caption{Geometric representation of ITE, QITE, and ACQ trajectories on the complex projective plane $\mathbb{CP}^N$. The initial state $\ket{\psi_0}$ evolves toward the ground state $\ket{E_0}$ via different paths. The black curve shows the ITE trajectory following gradient descent. The orange dashed line represents the geodesic (shortest path on $\mathbb{CP}^N$). Blue arrows show QITE discrete unitary steps, where each step requires computing a new unitary. Red arrows illustrate ACQ, which extends a single unitary propagation until energy increase signals deviation from the ITE path, thereby reducing the number of QITE routine executions while maintaining comparable accuracy.}
    \label{fig:trajectories}
\end{figure}

\subsection{The Proposal}

With the above pictorial view in mind, the new Adaptive time Compressed QITE (ACQ) algorithm is now introduced; an algorithm which combines an adaptive time step with a merging of the resulting unitaries of the QITE routine.

The ACQ algorithm relies on the QITE routine explained in Sec.~\ref{sec:QITE} to obtain the unitaries generated by $\boldsymbol{\hat{A}}_{n,k}$ that approximate ITE (see Eq.~\eqref{eq:QITE_cond}) generated by all the Hamiltonian pieces in Eq.~\eqref{eq:HamPieces}. Instead of recomputing the unitary
\begin{equation}
\label{eq:QITEunitary}
    \boldsymbol{\hat{U}}_m=\prod_k e^{-i \Delta \tau\boldsymbol{\hat{A}}_{m,k} }~,
\end{equation}
that evolves QITE at every step, here it is proposed that the same unitary be reused (following the red dashed line in Fig.~\ref{fig:trajectories}) until the energy of the evolved state starts increasing. The ACQ energy steps are labelled with a superscript $m$ and an additional parameter $l$ to distinguish them from the QITE steps. A line search in $l$ is performed until the energy increases, namely,
\begin{equation}\label{eq:ACQ_Eincrease}
    E_{l+1}^{(m)}-E_{l}^{(m)} > 0~,
\end{equation}
where the energy $E_l^{(m)}$ is defined as
\begin{equation}
    E_l^{(m)} =\bra{\psi_{m-1}} (\boldsymbol{\hat{U}}_m^\dagger)^l \boldsymbol{\hat{H}} \boldsymbol{\hat{U}}_m^l \ket{\psi_{m-1}}~,
\end{equation}
the expectation value of the Hamiltonian of the previous state $\ket{\psi_{m-1}}$ evolved $l$ times with the unitary \eqref{eq:QITEunitary}. In which case the state of the sequence would be 
\begin{equation}
    \ket{\psi_{m}}=(\boldsymbol{\hat{U}}_m)^{l_m}\ket{\psi_{m-1}}~,
\end{equation}
where $l_m$ is the first $l$ that obeys Eq.~\eqref{eq:ACQ_Eincrease}.
This method would greatly reduce the number of times the QITE routine is applied, and, in a practical scenario, the number of mid-circuit measurements required to compute the expectation values needed for computing the next step $\boldsymbol{\hat{A}}_{m,k}$. This scenario is represented in the second circuit of Fig.~\ref{fig:ACQ_steps}. This method also reduces significantly the number of times the classical optimization is performed.

\begin{figure}
    \centering
    \adjustbox{valign=c}{\includegraphics[width=0.47\linewidth]{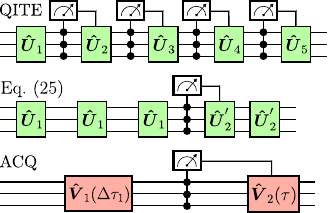}}
    \adjustbox{valign=c}{\includegraphics[width=0.51\linewidth]{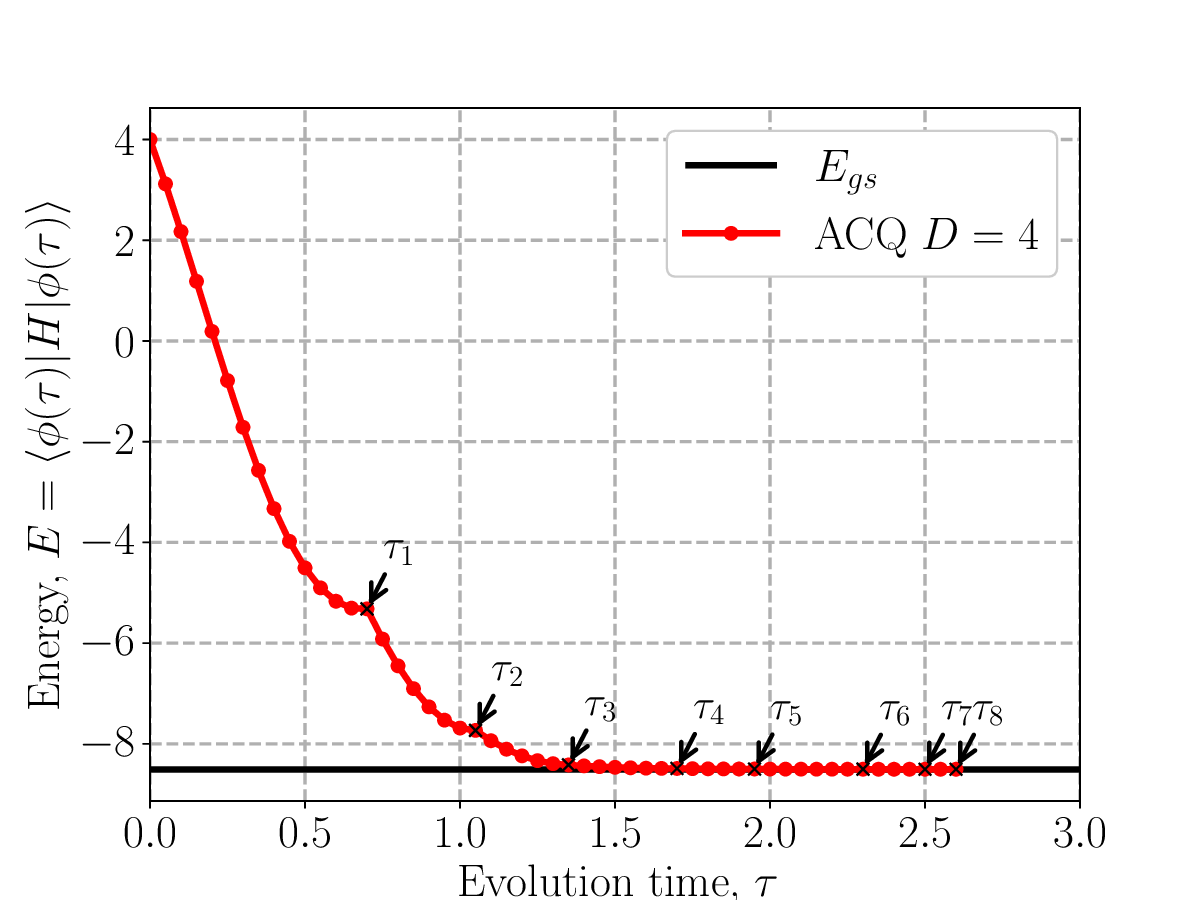}}
    \caption{\textbf{Left}: Circuit representation of QITE and ACQ. In the first circuit, measurements are performed after each unitary and the outcomes are used to create the following one. In the second circuit, the same unitary is reused multiple times until the energy increases, at which point a measurement is performed to obtain the following unitaries. In the last circuit, a compression of the unitaries of the previous circuit is performed showing both the reduction in depth and measurements of ACQ with respect to QITE. 
    \textbf{Right}: Adaptive time step calculation. The expectation value of the energy is plotted at finite times $\tau$ (sampled at time intervals $\Delta \tau =0.05$) of the state $\ket{\phi(\tau)}=\boldsymbol{\hat V}_n(\tau-\tau_{n-1})\ket{\psi_{n-1}}$ with $\ket{\psi_n}=\prod_{i=1}^n \boldsymbol{\hat V}_i(\Delta \tau_i)\ket{\psi_0}$ and $\tau_n > \tau > \tau_{n-1}$. The times $\tau_n$ at which the expectation value of the energy would start to increase, marked with an arrow, define the steps of ACQ. The inner workings of ACQ can be seen by a stalling of energy before $\tau_n$ and subsequent stepper descent in the energy after a new QITE unitary calculation. The adaptive time step $\Delta \tau_n=\tau_n-\tau_{n-1}$ is given by the time difference of these points. Here, the same Hamiltonian as in Fig.~\ref{fig:dist_ITE_GEO} is employed for a chain of $N=8$ qubits split into pieces of locality $K=2$ and a domain size of $D=4$ for the QITE unitaries.}
    \label{fig:ACQ_steps}
\end{figure}

\begin{table}[t]
    \noindent\rule{\linewidth}{0.4pt}
    \vspace{-0.4cm}
    \begin{algorithmic}[1]
      \STATE \textbf{Input:} Hamiltonian $H=\sum_k h_k$; initial state $\ket{\psi_0}$; domain size $D$; time increment $\Delta\tau$
      \STATE \textbf{Output:} States $\{\ket{\psi_m}\}$ and energies $\{E_m\}$
      
      \STATE Initialize $m \gets 0$
      \STATE $E_m \gets \langle \psi_m | H | \psi_m \rangle$
      \REPEAT
        \STATE $m \gets m + 1$
        \STATE $\{\{A_{m,k}\}, E_{\mathrm{next}}^{\mathrm{QITE}}\}
               \gets \mathrm{QITE\ on\ }(D,\{h_k\},\ket{\psi_{m-1}})$ \hfill \texttt{\# $E_\textrm{next}^\textrm{QITE}$ is the energy after QITE}
        \STATE $V_m(\tau) \gets \exp\!\big(-i\, \tau \sum_k A_{m,k}\big)$
        \STATE $\Delta\tau_m \gets 0$
        \REPEAT
          \STATE $\Delta\tau_m \gets \Delta \tau_m + \Delta\tau$
          \STATE $\ket{\psi_m} \gets V_m(\Delta \tau_m)\ket{\psi_{m-1}}$
          \STATE $E_m \gets \langle \psi_m | H | \psi_m \rangle$
          \STATE $E_{\mathrm{next}} \gets
            \langle \psi_m |
            V_m^\dagger(\Delta\tau)\, H\, V_m(\Delta\tau)
            | \psi_m \rangle$
        \UNTIL{$E_{\mathrm{next}} > E_m$} \hfill \texttt{\# This loop establishes the adaptive time step} 
        \STATE Store $\{\{A_{m,k}\},E_m,\Delta\tau_m\}$ 
      \UNTIL{$E_{\mathrm{next}}^{\mathrm{QITE}} > E_{m-1}$}
      \STATE \textbf{Return} $\{A_{m,k}\}, \{E_m\},\{\Delta \tau_m\}$ \hfill  \texttt{\# Discard last pair where energy increased}
    \end{algorithmic}
    \rule{\linewidth}{0.4pt}
    \caption{Pseudocode for the ACQ algorithm. }
    \label{alg:ACQ}
\end{table}

However, this method would increase the depth of the circuit that implements it. For that reason, the same techniques used in cQITE \cite{Nishi2021,Gomes2020} are employed, where a reverse Suzuki-Trotter decomposition is applied to the matrices that make up $\boldsymbol{\hat{U}}_m$ in Eq.~\eqref{eq:QITEunitary}. Moreover, this reverse decomposition allows us to define the following 1-parameter unitary matrix group
\begin{equation}\label{eq:CompressedU}
    \boldsymbol{\hat{V}}_m(\tau) = e^{-i \tau \sum_{k} \boldsymbol{\hat{A}}_{m,k} }~,
\end{equation}
that approximately reproduces $\boldsymbol{
\hat{U}}_m^l$ at times $\tau=l\Delta \tau$. This unitary allows to extend the unitary evolution produced by QITE to arbitrary times for a fixed-depth circuit, where different times only amount to a reparameterization of the gates composing $\boldsymbol{\hat{V}}_m(\tau)$. With this new definition the evolution of the previous state becomes
\begin{equation}
    \ket{\phi(\tau)}=\boldsymbol{\hat V}_m(\tau-\tau_{m-1})\ket{\psi_{m-1}}~,
\end{equation}
with $\tau_m > \tau > \tau_{m-1}$ where $\tau_m$ is the time at which $\ket{\psi_m}$ was found. One would start with $m=0$, $\ket{\psi_0}$ and $\tau_0=0$ and iteratively find all the subsequent times $\tau_m$. To find these $\tau_m$, the expectation value of the Hamiltonian is defined as
\begin{equation}
    E(\tau) = \bra{\phi(\tau)}  \boldsymbol{\hat{H}} \ket{\phi(\tau)}~,
\end{equation}
and \eqref{eq:ACQ_Eincrease} is conditioned to find the adaptive time step $\Delta \tau_m$ and the following step state $\ket{\psi_m}$ when 
\begin{equation}
    \frac{d}{dt}E (\tau) \Bigg|_{\tau=\tau_m} = 0~,
\end{equation}
where $\Delta \tau_m=\tau_m-\tau_{m-1}$ is the time difference between steps  and $\tau_m$ is the time at which the energy would start to increase. Since it is known that (Q)ITE initially follows a gradient descent, the extremal point $\tau_m$ will always be a minimum of the energy. 
The described procedure is iterative and requires finding previous steps to compute the following one as
\begin{equation}
    \ket{\psi_m}=\boldsymbol{\hat V}_m(\Delta \tau_m)\ket{\psi_{m-1}}~.
\end{equation}
Finally, the evolved state steps are
\begin{equation}
    \ket{\psi_m} = \prod_{j=1}^m \boldsymbol{\hat{V}}_j(\Delta \tau_j)\ket{\psi_0}~,
\end{equation}
where the adaptive time steps $\Delta \tau_j$ need to be found iteratively.
The ACQ evolution resource efficiency in terms of depth and mid-circuit measurements is represented in the last circuit of Fig.~\ref{fig:ACQ_steps}.
In the right panel of Fig.~\ref{fig:ACQ_steps} it is shown how the adaptive time step is obtained through an evolution  where the energy $E(t)$ is probed at finite time increments $\Delta \tau$.

The fixed increment $\Delta\tau$ used in this line search should be
distinguished from the adaptive propagation time $\Delta\tau_m$. The former is
the sampling resolution with which the energy profile along
$\hat{\boldsymbol{V}}_m(\tau)$ is probed, while the latter is the physical time
interval selected adaptively by the algorithm. Once the local QITE generator has
been computed, $\Delta\tau$ primarily controls the trade-off between
line-search resolution and measurement cost: smaller values locate the minimum
along $\hat{\boldsymbol{V}}_m(\tau)$ more accurately, whereas larger values
reduce the number of energy evaluations at the price of a coarser stopping
point. The usual QITE requirement that $\Delta\tau$ be sufficiently small to
construct a reliable local generator still applies. While the number of measurement required for computing a QITE unitary is exponential in the truncation domain size $D$, in general $N_K4^D$, where $N_K$ is the number of Hamiltonian pieces in Eq.~\eqref{eq:HamPieces}, the number of measurements required to evaluate the energy is at most linear with the system size $N$. For instance, for the TFIM model in Eq.~\eqref{eq:TFIM}, each evaluation of the energy only requires two basis measurements (one in the $Z$ basis and another in the $X$ basis)\footnote{These measurements have some sampling overhead required to reduce the shot noise.}. 

On the other hand, if one does not use any truncation in the domain $D$ of the unitaries of QITE, the ACQ algorithm (and QITE) would always decrease the energy, but the cost would remain exponential in the number of spins $N$ of the Hamiltonian. If a truncation of $D$ is used, both QITE and ACQ are limited and, at some point, the evolved state cannot get closer to the ground state, or decrease the energy. Below, the following stopping criterion is used for both algorithms. We proceed the algorithm until a point where the obtained unitary from QITE $\boldsymbol{\hat{U}}_n$ or the compressed unitary $\boldsymbol{\hat{V}}_m(\tau)$ generate a state with higher energy than the previous one, when this is achieved the algorithm is stopped. The detailed steps of ACQ are summarized in the Algorithm of Table \ref{alg:ACQ}, where the novelty of ACQ comes from checking if the energy of the following state increases in lines 15 and 17, and the compression of unitaries is performed in line 8.

It is worth mentioning that ACQ bears some resemblance to the so-called Boosted Imaginary Time Evolution (BITE) method introduced in \cite{boost_MPS}. There, similar geometric ideas were explored for matrix product states to reduce the number of times time-evolving block-decimation (TEBD) is applied on matrix product states, which is the costly part of the algorithm; in ACQ the goal is to reduce the amount iterations used to approximate ITE by unitaries. For ACQ, the approach involves implementing a line-search coupled with a compression scheme described above. These techniques mirror part 2 of the algorithm presented in Part B of \cite{boost_MPS} and adjustments of the bond-dimension\footnote{Analogous to the domain size $D$.} $\chi$ respectively. Although the methodology is different, the goal in both methodologies is to reduce the amount of approximate-ITE evolutions. In a sense, ACQ could be considered a sort of Boosted QITE (BQITE) since a classical version of ACQ, i.e., just an adaptive ITE, could be approximated by this BITE and vice-versa.

\section{Results}\label{sec:results}

\subsection{Performance comparison with QITE}

In Fig.~\ref{fig:fidelities_ACQ} two examples of the advantageousness of the ACQ method are exhibited. Once again, the TFIM presented in Eq.~\eqref{eq:TFIM} is used with parameters that make the ground state lie in the disordered phase where the ground state is non-degenerate ($J = 0.5$ and $h = 1$) and also the case of an ordered phase ($J=1$ and $h=0.5$) where the ground state is quasi-degenerate\footnote{It is degenerate in the thermodynamic limit. For a finite chain, the spectral gap is relatively small.}. It can be seen that in the disordered phase any algorithm can reach fidelities close to unity, but in the ordered phase, where the spectral gap is relatively small, both methods struggle to get good fidelities. This is a limitation of the truncation scheme of QITE, that is also present in ACQ. While increasing $D$ can help to reach better fidelities, the cost of the procedure is exponentially increased. This reason is what motivated the search of truncation schemes that are specially tailored for certain problems \cite{Alam2023,PhysRevA.109.052430,MolecularImprov}.
Both QITE and ACQ approximate the dynamics generated by ITE with a unitary at each step, but ACQ requires much fewer iterations of this procedure, which implies a partial state tomography and the resolution of the optimization procedure of Eq.~\eqref{eq:QITE_cond}. Gate count cost for a step of QITE and ACQ is comparable; a topic that will be further discussed in a following subsection.

\begin{figure}[t]
    \centering
    \includegraphics[width=\linewidth]{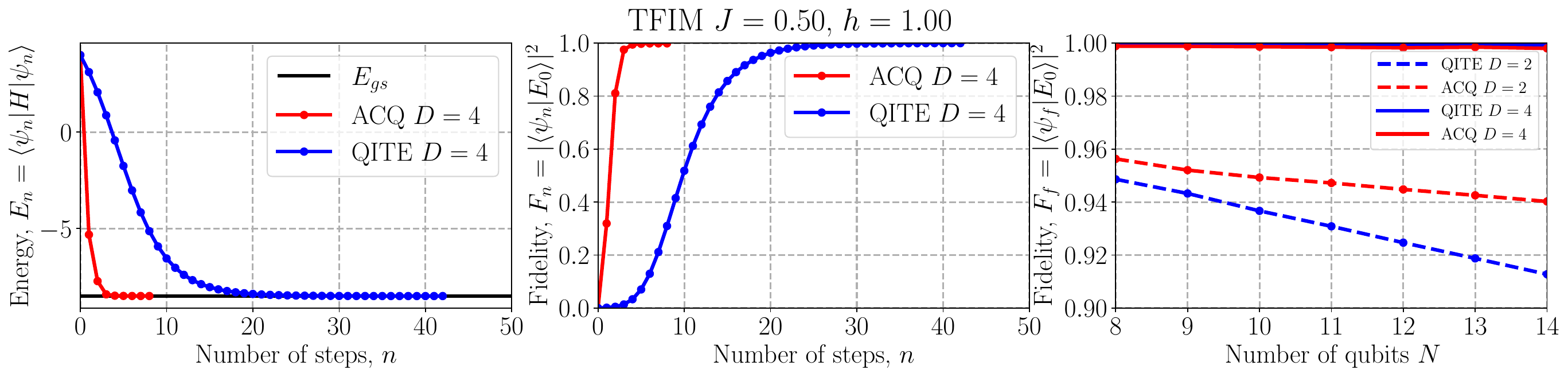}
    \includegraphics[width=\linewidth]{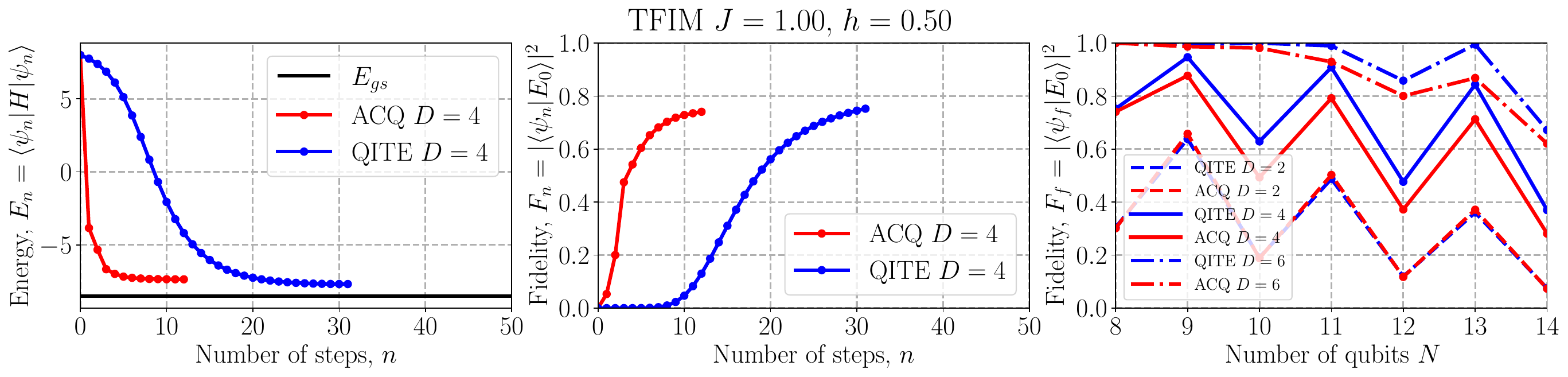}
    \caption{Comparison of QITE and ACQ for the TFIM with a ground state in the disordered phase ($J = 0.5$ and $h = 1$) in the top row, and a ground state in the ordered phase ($J = 1$ and $h = 0.5$) in the bottom row. \textbf{Left}: Energy evolution vs. iteration number for a system of $N = 8$ qubits. The black horizontal line indicates the exact ground state energy. Blue lines represent the evolution generated by regular QITE, and the solid red line the ACQ evolution. 
    ACQ requires way less QITE optimizations compared to standard QITE.  \textbf{Middle}: Fidelity evolution $F = |\langle E_0|\psi_n\rangle|^2$ showing that ACQ achieves comparable final fidelity with fewer QITE calls. \textbf{Right}: Maximum achieved fidelity vs. system size for increasing $N$. Both methods show similar fidelity degradation with increasing N, indicating this is a limitation of the domain truncation $D$ rather than the ACQ approach. Dashed lines: $D = 2$, solid lines: $D = 4$, dot dashed lines: $D=6$. In all cases, the time step considered to compute the QITE steps and optimization is $\Delta \tau=0.05$ and the initial state $\ket{\psi_0}=\ket{0}^{\otimes N}$.}
    \label{fig:fidelities_ACQ}
\end{figure}

In the right column of Fig.~\ref{fig:fidelities_ACQ} the accuracy of both methods is explored by comparing the maximum fidelity that can be reached with each method; this is done by varying system size and the truncation strategies. In the disordered phase, it can be seen that for a domain size of $D=2$ fidelities already reach a high value close to 0.95 while for $D=4$ they become closer to unity for both methods. Note that when the system size increases, the maximum fidelity decreases. 
This degradation is observed for all values of $D$, indicating it is a fundamental limitation of the domain truncation scheme rather than a deficiency of ACQ specifically. The oscillation of the maximum fidelity reached in the ordered phase is directly related with the spectral gap being dependent on the parity of $N$ \cite{sym14050996}.

\begin{figure}
    \centering
    \includegraphics[width=\linewidth]{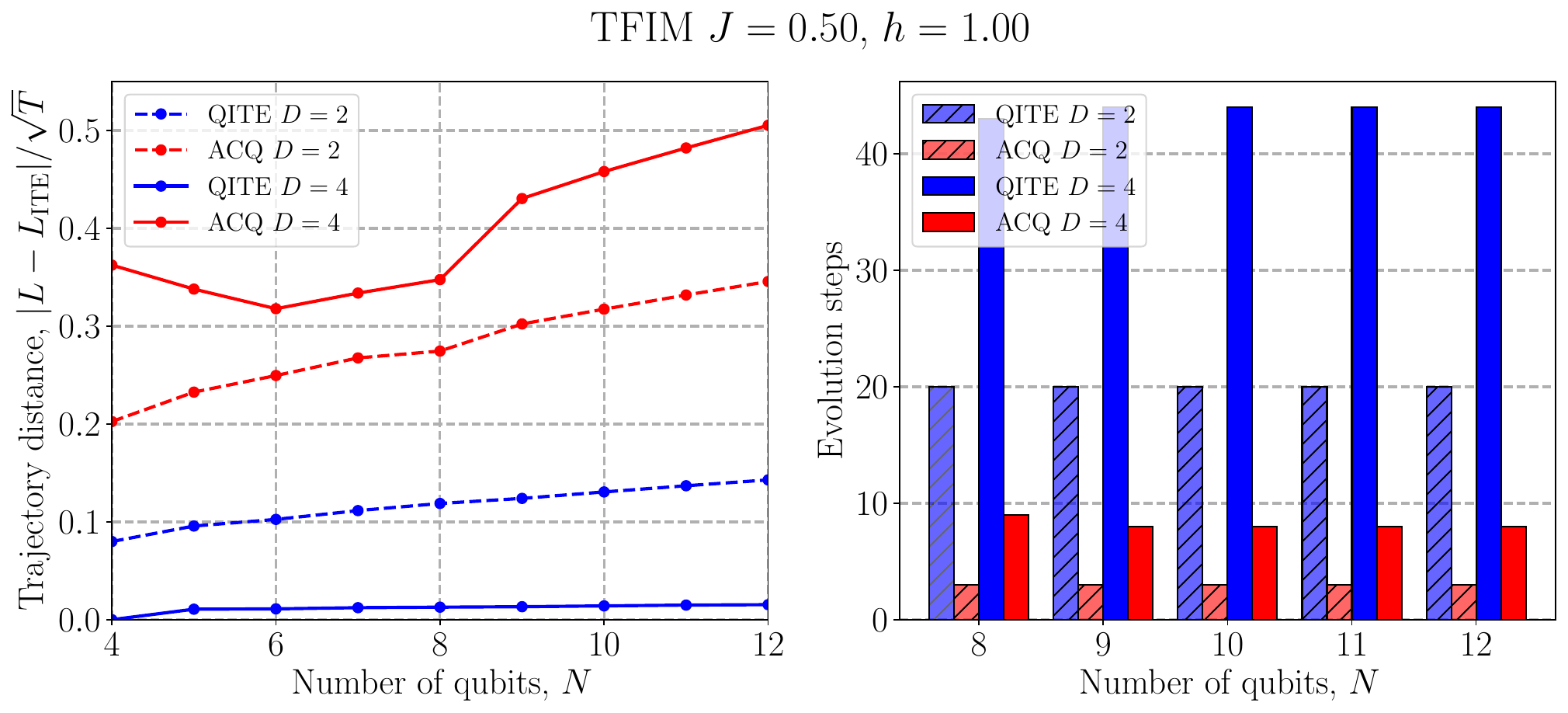}
    \caption{\textbf{Left}: Trajectory distance from QITE (in blue) and ACQ (in red) to ITE for different values of qubits $N$ and domain size $D$. Dashed lines: $D=2$, solid lines: $D=4$. \textbf{Right}: Number of steps taken for each algorithm to reach the minimum in energy for different number of qubits $N$ and domain size $D$. Blue bars correspond to QITE and red bars to ACQ. Hatched bars: $D=2$ , solid bars: $D=4$. In both case the TFIM has been considered in the disordered phase ($J=0.5$ and $h=1$), with the same parameters and initial states as in Fig.~\ref{fig:fidelities_ACQ}.}
    \label{fig:ACQ_dist_cost}
\end{figure}

The difference between the disordered and ordered phases is consistent with
the role of the locality parameter $D$ discussed in Sec.~\ref{geosec}. In the
disordered phase, correlations remain relatively short-ranged, so a modest
finite domain can already capture the relevant local tangent directions with
good accuracy. By contrast, in the ordered phase, longer-range correlations
become more important, and the same value of $D$ may no longer resolve the
effective imaginary-time direction sufficiently well. In this regime, the
finite-domain approximation neglects contributions from operators with support
larger than $D$, which can have non-negligible weight when the effective
imaginary-time direction is no longer well approximated by a quasi-local
generator.

This should be interpreted as a limitation inherited from finite-domain QITE,
rather than as a limitation introduced by the adaptive compression step. Indeed,
ACQ constructs each adaptive direction from the same QITE generator used in the
standard algorithm. Therefore, if the domain size is too small to capture the
correlations relevant to the imaginary-time tangent direction, both QITE and
ACQ are affected by the same truncation error. Conversely, when $D$ is large
enough to provide a reliable local descent direction, ACQ can exploit this
direction over a longer adaptive interval before recomputing the generator.

This caveat is also consistent with the discussion of double-bracket-type flows:
locality of the Hamiltonian alone does not guarantee that the corresponding
effective imaginary-time direction remains local at all times~\cite{Matt}.

It can be seen that the fidelities reached by ACQ are very similar to the ones obtained with regular QITE. One can already see that the sacrifice in accuracy is justified by the reduction in the cost pertaining to circuit depth and run-time. In a following subsection, among other things, $\mathcal{D}_v(S_{1}, S_{2})$ Eq.~\eqref{eqn:speed} will be used to analyse the discrepancy between QITE and ACQ in a more rigorous manner. Similar explorations to the one made here were performed for the Heisenberg model presented below, and the Cluster-Ising model, and are available in the repository associated with this work.

\subsection{Resource cost analysis}

So far, the QITE step $\boldsymbol{\hat U}_n$ in Eq.~\eqref{eq:QITEunitary} and the ACQ step $\boldsymbol{\hat V}_n(\Delta \tau_n)$ in Eq.~\eqref{eq:CompressedU} have been considered on an equal footing in terms of cost. It has been claimed that, in the cases of Fig.~\ref{fig:fidelities_ACQ}, ACQ reached comparable fidelities as QITE with much fewer steps, that is, fewer midcircuit measurement required to compute the following step unitaries. In Appendix \ref{sec:app_transpilation} arguments that justify that one step of QITE and ACQ also accrue the same cost in terms of number of gates are given. In the end, the cost comparison both in terms of gates and required measurements is equivalent for the same number of steps of both methods. For that reason, in the right panel of Fig.~\ref{fig:ACQ_dist_cost}, the number of required steps to reach the minimum of energy both for QITE and ACQ with comparable fidelities are compared. It can be observed that the number of ACQ steps is consistently smaller than for QITE for a wide range of system sizes and different values of the truncation parameter $D$.

\subsection{Comparison with DB-QITE}

The ACQ algorithm presented here inherits the hybrid nature of the QITE algorithm, i.e., unitary evolution which is constructed from a classical optimization to build the unitary that more closely approximates the ITE, after a partial tomography through measurements between steps. The novelty of the DB-QITE algorithm introduced in \cite{glu} is its pure quantum nature, as it does not require mid-circuit measurements. 
In this section, we compare the performance of both algorithms for two models: the TFIM already introduced in Eq.~\eqref{eq:TFIM}, and the antiferromagnetic Heisenberg model defined by the Hamiltonian
\begin{equation} 
    H = \sum_{i=1}^{N-1} \boldsymbol{\hat \sigma}^x_j\boldsymbol{\hat \sigma}^x_{j+1} +\boldsymbol{\hat \sigma}^y_j\boldsymbol{\hat \sigma}^y_{j+1} + \boldsymbol{\hat \sigma}^z_j\boldsymbol{\hat \sigma}^z_{j+1}~,
\end{equation}
(where now this system has open boundary conditions) which is the model explored in that reference, in order to make a direct comparison. Notice that this model possesses a high degree of symmetry, and thus it allows for an efficient preparation of a warm start. In other words, results from this model cannot be extrapolated to other models, as it will be made clear for the TFIM below. In particular, a Hamiltonian Variational Ansatz (HVA) \cite{mcardle2019variational,PRXQuantum.1.020319}, 
and a state that is a pair-wise product of singlet states, was used in \cite{glu}. As an alternative, we also employ an initial state with alternating spins $\ket{\psi_0}=\ket{01}^{\otimes N/2}$ that has, in opposition to previous states, a small overlap with the ground state.
For the TFIM, we consider the same initial state as before  $\ket{\psi_0}=\ket{0}^{\otimes N}$ and a HVA state; more details of how this state is prepared for the TFIM are given in Appendix~\ref{sec:app_transpilation}.
We have employed the computational methods provided in the repository of \cite{glu} to compute the DB-QITE dynamics, and used the transpilation methods therein to evaluate the resource cost of either ACQ and DB-QITE presented below. See  Appendix~\ref{sec:app_transpilation} for further details.

\begin{figure}[t]
    \centering
    \includegraphics[width=\linewidth]{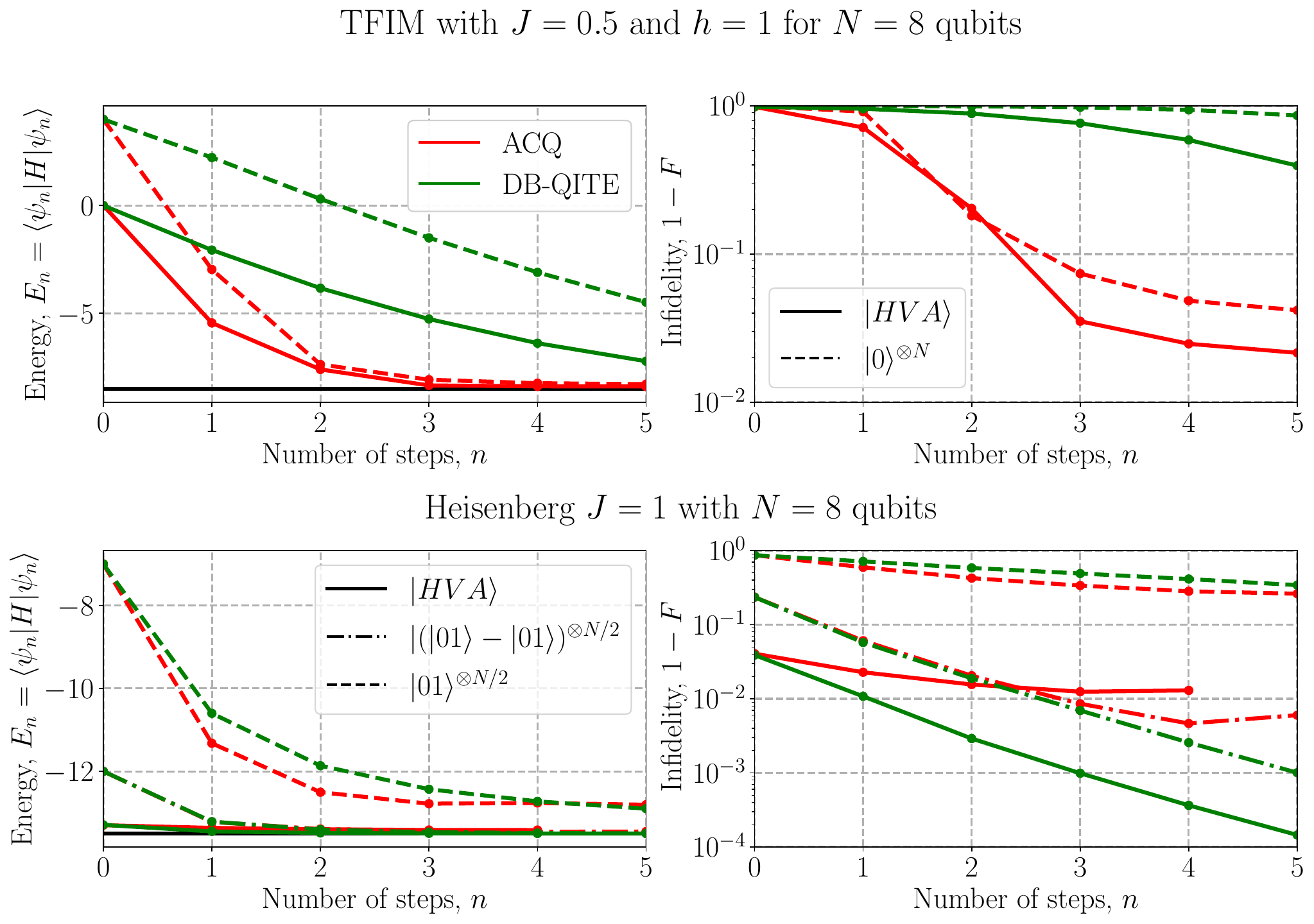}
    \caption{Comparison of the energy decrease (left) and fidelity increase (right) between ACQ and DB-QITE for the TFIM (top row) and the Heisenberg model (lower row). The red (green) colour is used to refer to ACQ (DB-QITE). Different stroke lines represent different initial state evolutions captured in the legend of each model. Since fidelities reach values close to 1, we plot the infidelity instead in logarithmic scale, to easily differentiate the maximum accuracy achieved (lower is better).} 
    \label{fig:comp_DBF}
\end{figure}

The results of ACQ come from the transpiled version of the algorithm discussed in Appendix~\ref{sec:app_transpilation} since the results of the DB-QITE are also the ones coming from the transpiled circuit; both methods include transpilation, necessary for counting gate resources, and also include the corresponding error generated by it.

Figure~\ref{fig:comp_DBF} shows the energy reduction, and the infidelity decrease obtained for both methods within all the described scenarios.
It can be observed that the convergence to the ground state energy of ACQ for the TFIM model is faster in the number of steps, while in the Heisenberg model the speed of convergence is comparable in both cases. If we focus on infidelity decrease, in the TFIM model the decrease is stepper for ACQ. In the Heisenberg model, where higher fidelities are reached, we can see that ACQ fails to match  the rate the DB-QITE exhibits after fidelities over $99\%$. This fact is due to the truncation strategy available to ACQ (and QITE), where the size of the unitaries is restricted to be $D$-local. If higher fidelities were required, the truncation parameter $D$ could be increased, at the expense of increasing the resource cost. 

Table~\ref{tab:resourcecost_ACQDB} gathers the numerical values of the fidelities and gate resources in terms of \texttt{u3} and \texttt{cx} gates for the number of steps of each method, for two particular initial states of each problem Hamiltonian. We have also included the case $D=2$, not shown in Fig.~\ref{fig:comp_DBF}, to stress out the relevance of choosing an appropriate truncation scheme. On the one hand, thanks to the adaptive nature of ACQ, the fidelity quickly jumps to high values in the early steps, while the later ones let ACQ reach the maximum fidelity allowed by the truncation. It can be seen that the number of gates increases in a linear fashion per step. On the other hand, DB-QITE has a more modest fidelity increase, and due to the recursive construction of its unitaries, the number of gates increases exponentially (by approximately a factor 3)\footnote{The following step unitary is composed by 3 of the previous step unitaries, and also some additional operations in between, see \cite{glu}.} per step. Since DB-QITE does not include a truncation scheme, the fidelity increase does not appear to be bounded, as seen in Fig.~\ref{fig:comp_DBF} where the infidelity for the initial HVA state reaches values of $10^{-4}$. We also included the number of Pauli string measurements required for each step of ACQ. 
To summarize, even if ACQ needs to perform intermediate circuit measurements, which is optimized by the adaptive  strategy of ACQ, it represents an exponential reduction of resources as the number of steps is increased when compared with DB-QITE.

\subsection{Geometric path comparison}

The geometric analysis in Section \ref{sec:geom} provides insight into QITE's performance in reproducing the cooling process of ITE. The left panel of Fig. \ref{fig:dist_ITE_GEO} shows that ITE trajectories deviate more from geodesics as $N$ increases suggesting that QITE, which is composed of a concatenation of unitaries, will also depart from ITE as $N$ increases. In the left panel of Fig.~\ref{fig:ACQ_dist_cost} one can see that this is indeed the case, and, as already pointed in Fig.~\ref{fig:dist_ITE_GEO} as $D$ is increased it can be seen here that the distance to ITE is reduced. ACQ on the other side is not trying to reproduce ITE, and as the adaptive time step $\Delta \tau_n$ is coarse in comparison to QITE's step (see Fig.~\ref{fig:trajectories} for reference), it is expected that the ACQ path deviates from ITE more than QITE. While increasing $D$ reduced the distance to ITE for QITE, this is not the case for ACQ, where increasing $D$ helps getting closer to the ground state and hence travel more distance. 

It is worth to point out that, even though ACQ deviates significantly from ITE, the right column of Fig.~\ref{fig:fidelities_ACQ} demonstrates that ACQ still achieves comparable fidelity to QITE even at $N = 14$. This indicates that exact geodesic following is not necessary for good performance; rather, the key is that locally, ITE is well-approximated by geodesic segments, which ACQ exploits through its line search strategy.

\begin{table}[t]
    \centering
\begin{subtable}{0.55\textwidth}
\centering
\resizebox{\textwidth}{!}{%
\begin{tabular}{r|rrr|rrr|rrr}
\toprule
$n$ & \multicolumn{3}{c}{ACQ $D=2$} & \multicolumn{3}{c}{ACQ $D=4$} & \multicolumn{3}{c}{DB-QITE} \\
 & $F$ & \texttt{u3} & \texttt{cz} & $F$ & \texttt{u3} & \texttt{cz} & $F$ & \texttt{u3} & \texttt{cz} \\
\midrule
0&0 & 0 & 0 & 0 & 0 & 0 & 0 & 0 & 0 \\
1&0.61 & 40 & 16 & 0.09 & 334 & 392 & 0 & 209 & 130 \\
2&0.92 & 72 & 32 & 0.82 & 1958 & 2224 & 0.01 & 733 & 462 \\
3&0.93 & 104 & 48 & 0.93 & 3587 & 4056 & 0.02 & 2217 & 1404 \\
4&0.93 & 136 & 64 & 0.95 & 5226 & 5885 & 0.06 & 6258 & 3998 \\
5& &  &  & 0.96 & 6866 & 7717 & 0.14 & 18415 & 11832 \\
\midrule
$N_M$ & \multicolumn{3}{c}{36} & \multicolumn{3}{|c|}{666} & \multicolumn{3}{c}{0}\\ 
\bottomrule

\end{tabular}
}
    \caption{TFIM disordered phase with initial state $\ket{\psi_0}=\ket{0}^{\otimes N}$}
    \label{tab:TFIM_ACQ_DB}
   \end{subtable} 
   \begin{subtable}{0.43\textwidth}
    \centering
   \resizebox{\textwidth}{!}{%
\begin{tabular}{r|rrr|rrr}
\toprule
$n$ & \multicolumn{3}{c}{ACQ $D=4$} & \multicolumn{3}{c}{DB-QITE} \\
 & $F$ & \texttt{u3} & \texttt{cz} & $F$ & \texttt{u3} & \texttt{cz}   \\
\midrule
0 & 0.765 & 12 & 4 & 0.765 & 12 & 4 \\
1 & 0.939 & 171 & 178 & 0.943 & 309 & 162 \\
2 & 0.98 & 503 & 551 & 0.981 & 1128 & 609 \\
3 & 0.991 & 828 & 930 & 0.993 & 3585 & 1950 \\
4 & 0.995 & 1144 & 1301 & 0.997 & 10938 & 5964 \\
5 & 0.994 & 1477 & 1670 & 0.999 & 32763 & 17889 \\
\midrule
$N_M$  & \multicolumn{3}{c|}{666} & \multicolumn{3}{c}{0}\\ 
\bottomrule
\end{tabular}
}
    \caption{Heisenberg model initial singlet state.}
    \label{tab:Heisenberg_ACQ_DB}
\end{subtable}
\caption{Table containing the resource cost comparison between the ACQ algorithm proposed here and the DB-QITE algorithm proposed in \cite{glu}. We include a column for the cost of single qubit gates \texttt{u3}, one for two qubit gates \texttt{cx} and another for the fidelity $F$ reached at each step $n$. Step 0 counts how many gates are required to prepare the initial state, and the fidelity of the initial state with respect to the ground state. The last row indicates the number of measurements $N_M$ required per time step for each method. We include two tables for the comparison with different models and different initial states.}
\label{tab:resourcecost_ACQDB}
\end{table}

\section{Conclusions}\label{sec:conclusion}

The main contribution of this paper is the Adaptive time Compressed QITE (ACQ) algorithm. An algorithm developed for the implementation of imaginary time evolution (ITE) preparation of ground states on quantum hardware in a more resource-efficient way than the original proposal \cite{Motta2019DeterminingEA}. 
An intensive review of this method and alternatives in the literature that yield improvements was made in Section \ref{sec:intro}. 

In Section \ref{sec:geom} we reviewed the geometrical properties of ITE present in the literature, we showed that QITE reproduces geodesic trajectories for the simple case corresponding to Hamiltonians with spectral cardinality 2 and introduced a measure to quantify the departure of any of these trajectories from the geodesic path; a measure which may also be used for studying the proximity of trajectories generated via any two of the popular iterations of ITE. It was observed that this measure for ITE increased for larger $N$. When comparing QITE with ITE, it was observed that as the domain size $D$ or the truncation procedure are improved, the measure reduced. 

Equipped with the geometric knowledge of ITE methods, arguments that justify the use of some approximations that help in reducing the resource cost of QITE, in terms of circuit depth and measurements, are given and the novel algorithm ACQ is introduced. The novelty of ACQ, comes from finding the adaptive time steps $\Delta \tau_n$ that yield a minimum in energy at each step, and compressing the unitaries to keep the gate count at bay. Wherefrom a significant reduction in cost is achieved with ACQ whilst maintaining fidelities comparable to those of QITE. The comparison with the Double Bracket QITE algorithm showed an advantageous gate count scaling for similar fidelities and sometimes with better performance in less structured models.

\ack{
This work was supported by the project PID2023-152724NA-I00, with funding from MCIU/AEI/10.13039/501100011033 and FSE+,
the Severo Ochoa Grant CEX2023-001292-S, Generalitat Valenciana grant CIPROM/2022/66, the Ministry of Economic
Affairs and Digital Transformation of the Spanish Government through
the QUANTUM ENIA project call - QUANTUM SPAIN project, and by the
European Union through the Recovery, Transformation and Resilience
Plan - NextGenerationEU within the framework of the Digital Spain
2026 Agenda, and by the CSIC Interdisciplinary Thematic Platform (PTI+)
on Quantum Technologies (PTI-QTEP+). This project has also received
funding from Horizon Europe EU projects MSCA-SE CaLIGOLA, Project
ID: 101086123, and MSCA-DN CaLiForNIA, Project ID: 101119552. 

T.P. acknowledges the support of the Generalitat Valenciana under grant CIPROM/2022/66, which facilitated research stays at the University of Valencia during 12–17 January and 14–20 July 2025 and also   acknowledge the Research Council of Finland for funding through Grant No. 359284/Finnish Quantum Flagship.

A.A.M would like to thank the COMCUANTICA/007 from the Generalitat Valenciana, Spain for supporting him during the development of this work.

R.G.L. is funded by grant CIACIF/2021/136 from Generalitat Valenciana.

M.A G-M acknowledges support from the Ministry for Digital Transformation and
of Civil Service of the Spanish Government through the QUANTUM ENIA project
call—Quantum Spain project, and by the European Union through the Recovery,
Transformation and Resilience Plan—NextGenerationEU within the framework of the
Digital Spain 2026 Agenda: also from Projects of MCIN with funding from Euro-
pean Union NextGenerationEU (PRTR-C17.I1) and by Generalitat Valenciana, with
reference 20220883 (PerovsQuTe) and COMCUANTICA/007 (QuanTwin).
}

\section*{Code Availability}

The code used to generate the results presented in this study is publicly available in an open-access repository. The source code is available at \url{https://github.com/dark-dryu/ACQ}

\appendix 

\section{Geometric Properties of ITE and geodesics in complex projective space}\label{sec:app_geometric}


Let us restrict ourselves to the case of one spin, i.e., the Hilbert space of interest is $\mathbb{C}^2$. Without loss of generality, consider an arbitrary Hamiltonian $\boldsymbol{\hat{H}}$ with the spectral decomposition $\boldsymbol{\hat{H}}= E_{0}|E_{0}\rangle\langle E_{0}|+ E_{1}|E_{1}\rangle\langle E_{1}|$. Furthermore, consider an initial state $|\psi(0)\rangle$, such that $\langle E_{0}|\psi(0)\rangle\neq0$. Then, we have the following result. 
\begin{definition2}
\label{eqn:2disgeo}
Let $\boldsymbol{\hat{\rho}}(0)=|\psi(0)\rangle\langle \psi(0)|$, where we express the density operator in the Pauli basis
\begin{equation}
\rho(0) = \frac{1}{2}\Big(\mathbb{I} + \vec{r}(0)\cdot\boldsymbol{\hat{\vec{\sigma}}} \Big)~,
\end{equation}
with $\vec r(0)=(r_1,r_2,r_3)$, 
then, the antihermitian matrix $[\boldsymbol{\hat{\rho}}(0), \boldsymbol{\hat{H}}]$ generates the geodesic (one-parameter group) connecting $|\psi(0)\rangle$ and $|E_{0}\rangle$. Hence, the smallest arc between these two states may be parametrized as follows.
\begin{equation}
|\psi(\gamma)\rangle=e^{\gamma[\boldsymbol{\hat{\rho}}_{0}, \boldsymbol{\hat{H}}]}|\psi(0)\rangle~,
\end{equation}
where 
\begin{equation}
 0\leq\gamma \leq  \frac{2\arccos\Big(\sqrt{\frac{1+r_3}{2}}\Big) }{\omega\sqrt{(r_1^2+r_2^2)}}~,
\end{equation}
where $\omega=E_1-E_0$.  Here, $\gamma=0$ corresponds to the initial state and the upper bound corresponds to the ground state.
\end{definition2}
\begin{proof}
	
The density matrix of the initial state can be represented with the polarization vector $\vec{r}(0)$ as
\begin{equation}\label{eq:InitialBlochVector}
    \boldsymbol{\hat{\rho}}(0) =|\psi(0)\rangle\langle \psi(0)| = \frac{1}{2}\Big(\mathbb{I} +\vec{r}(0) \cdot \hat{\vec{\boldsymbol{\sigma}}} \Big)~,
\end{equation}
where $\vec r(0) = (r_1,r_2,r_3)$ with $\norm{\vec r(0)}=1$, and $\hat{\vec{\boldsymbol{\sigma}}}$ the Pauli vector.
The exponential in the unitary \eqref{eq:DB-ITE} can be computed for a general 1-qubit Hamiltonian expressed as $\boldsymbol{\hat{H}}=E_1\ketbra{E_1}+E_0\ketbra{E_0}$, so that 
\begin{equation}
    [\boldsymbol{\hat{\rho}}(0), \boldsymbol{\hat{H}}]  =\Big[\frac{1}{2}\Big(\mathbb{I} +\vec r(0)\cdot\boldsymbol{\hat{\vec{\sigma}}}\Big), \boldsymbol{\hat{H}}\Big] = i \frac{\omega}{2} \vec{n}\cdot\boldsymbol{\hat{\vec{\sigma}}}~,
\end{equation}
with $\vec{n}=(-r_2,r_1,0)=\vec r(0) \times\hat{z}$ and $\omega=E_1-E_0$. 
From elementary geometry it is known that this is the generator of a rotation around ${\vec{n}}$ whose respective rotations generate the great circle on the Bloch sphere which includes both $\hat{z}$ and ${\vec{r}}$. We can express the generated unitary as $U(\vec{n},s) =  e^{is\frac{\omega}{2}\boldsymbol{\vec{n}}\cdot \boldsymbol{\vec{\sigma}}}$
which represents a rotation about $\vec{n}$, with angle $\theta=\omega s \norm{\vec{n}}$, of polarization vectors, i.e.,
\begin{equation}
    U(\vec{n},s) \frac{1}{2}\Big(\mathbb{I} +\vec{r} \cdot \hat{\vec{\boldsymbol{\sigma}}} \Big) U^\dagger (\vec{n},s)= \frac{1}{2}\Big(\mathbb{I} +(R_{\hat{n}}(\theta)\vec{r}) \cdot \hat{\vec{\boldsymbol{\sigma}}} \Big)~,
\end{equation}
where $\hat{n}=\vec{n}/\norm{\vec{n}}$ is the normalized vector\footnote{We use the hat to refer to normalized vectors}.
This means that there exists an $s$ such that 
\begin{equation}\label{eqn:desired}
    \boldsymbol{\hat{\rho}}(s)=U(\vec{n},s) \boldsymbol{\hat{\rho}}(0) U^\dagger(\vec{n},s) = |E_{0}\rangle\langle E_{0}|=\boldsymbol{\hat{\rho}}_{gs}~.
\end{equation}
To find out what this $s$ is, we will use the Fubini-Study metric in Eq.~\eqref{eq:fubini_dist} to check that $d_{FS}(\ket{\psi(0)},\ket{E_0}) = d_{FS}(U^\dagger(\vec{n},s)\ket{E_0},\ket{E_0})$, so that 
\begin{align}
     \nonumber \arccos\big({\sqrt{\langle E_0|\boldsymbol{\hat{\rho}}(0)|E_0\rangle}}\big) &=
   \arccos\bigg( \sqrt{\langle E_{0}|U^{\dagger}(\vec{n},s)|E_{0}\rangle\langle E_{0}| U^(\vec{n},s) |E_{0}\rangle}\bigg)  \\
 &=\arccos\big( \big|\langle E_{0}|U(\vec{n},s)|E_{0}\rangle|\big) ~. \label{eqn:desired_distance}
 \end{align}
We can use the exponential of a Pauli vector to compute
\begin{equation}
    U(\vec{n},s) = \cos\left(\frac{\omega s}{2}\norm{n}\right) \mathbb{I} - i \sin \left(\frac{\omega s}{2}\norm{n}\right) \vec{n}\cdot\boldsymbol{\hat{\vec\sigma}}~,
\end{equation}
and noting that $\boldsymbol{\hat{\rho}}_{gs}=\ketbra{E_0}=\frac{1}{2}(\mathbb{I}+\boldsymbol{\hat{\sigma}}_z)$ we can compute $\langle E_{0}|U(\vec{n},s)|E_{0}\rangle$ as
\begin{equation}
     Tr\{\boldsymbol{\hat{\rho}}_{gs}U(\vec{n},s)\} = \cos\left(\frac{\omega s}{2}\norm{n}\right)~,
\end{equation}
which for $\vec n = (-r_2,r_1,0)$ implies in Eq.~\eqref{eqn:desired_distance} that
\begin{equation}
    \cos\left(\frac{\omega s}{2}\sqrt{r_1^2+r_2^2}\right) = \sqrt{\braket{E_0|\boldsymbol{\hat{\rho}}(0)|E_0}}=\sqrt{\frac{1+r_3}{2}}~,
\end{equation}
and that 
\begin{equation}\label{eqn:thetime}
    s=\frac{2\arccos(\sqrt{\frac{1+r_3}{2}})}{\omega\sqrt{r_1^2+r_2^2}}~.
\end{equation}
We can check in these equations that if $\boldsymbol{\hat{\rho}}(0)=\boldsymbol{\hat{\rho}}_{gs}$ we have $r_3=1$ and that $s=0$ (initial state is the same as the final one) and that if $r_3 \to -1$ (zero overlap with the ground state) then $s \to \infty$ in Eq.~\eqref{eqn:thetime}.
To verify that indeed Eq.~(\ref{eqn:thetime}) leads to Eq.~(\ref{eqn:desired}), let us directly compute $U^\dagger(\vec{n},s)\boldsymbol{\hat{\rho}}_{gs}U(\vec{n},s)$ and show that it is equivalent to our initial state $\boldsymbol{\hat{\rho}}(0)$. 
\begin{equation}
    U^\dagger(\vec{n},s)\boldsymbol{\hat{\rho}}_{gs}U(\vec{n},s) = \frac{1}{2}\Big(\mathbb{I} +(R_{\vec{n}}(\theta)\hat{z}) \cdot \hat{\vec{\boldsymbol{\sigma}}} \Big)~,
\end{equation}
where the rotation is given by Rodrigues' formula
\begin{equation}
    R_{\hat{n}} (\theta) \hat z= \hat z\cos\theta + (\hat{n} \times \hat{z})\sin\theta + \hat{n} (\hat{n}\cdot\hat{z}) (1-\cos\theta)~,
\end{equation}
with $\theta=\omega s \norm{\vec{n}} = 2\arccos(\sqrt{\frac{1+r_3}{2}})$ given in Eq.~\eqref{eqn:thetime} which simplifies to
\begin{equation}
\label{thecitationineed}
    R_{\hat{n}}(\theta) \hat z = \hat{z} \cos\theta +\frac{ r_1\hat{x} + r_2\hat{y}}{\sqrt{r_1^2+r_2^2}}\sin\theta~. 
\end{equation}
Finally, using the trigonometric identities
\begin{align}
    &\cos\left(2\arccos\left(\sqrt{\frac{1+r_3}{2}}\right)\right) =r_3~, \\
    &\sin\left(2\arccos\left(\sqrt{\frac{1+r_3}{2}}\right)\right) = \sqrt{1-r_3^2} ~,
\end{align}    
we can readily see that
\begin{equation}
    R_{\hat{n}}(\theta) \hat z = \vec{r}~,
\end{equation}
which is the Bloch vector of the initial state Eq.~\eqref{eq:InitialBlochVector}.

We have now established that the rotation $R_{\hat{n}}(\theta)$ applied to $\hat{z}$ yields the Bloch vector 
$\vec{r}$ of the initial state. To complete the proof, we need to determine the specific value of $s$ that 
transforms $|\psi(0)\rangle$ to the ground state $|E_{0}\rangle$. 
However, we have $R_{\hat{n}}(\theta)\hat{z} = \vec{r}$. Comparing with Eq.~(\ref{thecitationineed}), this requires: 
\begin{equation}
 \cos(\theta) = r_{3}
\end{equation}
\begin{equation}
\sin(\theta)= \sqrt{r_{1}^2 + r_{2}^2}.
\end{equation}
The second equation gives $\sin(\theta) = 1$ when $r_{1}^{2} + r_{2}^{2} \neq 0$. Since $\cos(\theta) = r_{3}$ and $\sin^[{2}(\theta) + 
\cos^{2}(\theta) = 1$, we verify: 
\begin{equation}
r_{3}^{2} + (r_{1}^{2} + r_{2}^{2}) = 1~,
\end{equation}
which is satisfied by the normalization of the polarization vector. The angle $\theta$ is therefore: 
\begin{equation}\theta = \arccos(r_{3}) = 2\arccos(\sqrt{(1+r_{3})/2})~,
\end{equation}
where we used the half-angle formula $\cos(\theta) = 2\cos^{2}(\theta/2) - 1$, which gives $\cos(\theta/2) = 
\sqrt{(1+r_{3})/2}$. 
From Eq.~(\ref{eqn:thetime}), we have $\theta = \omega s\sqrt{r_{1}^{2} + r_{2}^{2}}$. Substituting the expression for $\theta$ derived above: 
\begin{equation}
s = \frac{2\arccos(\sqrt{\frac{1+r_3}{2}})}{\omega\sqrt{r_1^2+r_2^2}}~.
\end{equation}
This establishes that the unitary $\boldsymbol{\hat{U}}(\vec{n}, s)$ with this value of $s$ 
transforms $|\psi(0)\rangle$ to $|E_{0}\rangle$, completing the proof. 

\end{proof}
Lemma \ref{eqn:2disgeo} demonstrates that with one iteration of the linearized version \eqref{eq:DB-ITE} of the gradient descent equation of interest, we are able to deduce the generator of the one-parameter group which transports the initial state $|\psi(0)\rangle$ to the ground state of $\boldsymbol{\hat{H}}$.

Owing to the relationship between DB-QITE and QITE, the reader probably intuits at this point, the trajectories traced out by QITE should approximately follow a geodesic given that ITE traces out a geodesic; i.e., the above lemma shows that for the case of one-qubit systems, one iteration of DB-QITE can produce the one-parameter group connecting the initial state to the ground state. Given that DB-QITE and QITE are algorithms that approximate each other \cite{glu}, it is expected that one should be able to produce such a unitary matrix after one iterative step for the case of QITE.  We provide a sketch of a proof for the latter in Appendix \ref{sec:1qbit_geo_QITE}. Therein we show that a single QITE iteration is enough to produce the generator of the geodesic connecting the initial state to the ground state.

\section{QITE moves along geodesics for the case of a single qubit}
\label{sec:1qbit_geo_QITE}

We begin by presenting a Lemma from \cite{arxisuzukisignaling}. 
\begin{definition2}
\label{eqn:suzukilemm}
Let $\boldsymbol{\hat{H}}$ be a Hermitian matrix and let $\alpha$ be a real number. Then, given an input state $\boldsymbol{\hat{\rho}}_{0}:=|\psi \rangle\langle\psi|$, 
\begin{equation}
	\big(\boldsymbol{\hat{H}}-\alpha\mathbb{I}\big) \big|\psi\rangle= e^{s_{\psi}[\boldsymbol{\hat{\rho}}_{0}, \boldsymbol{\hat{H}}]}\big|\psi\rangle~,
\end{equation}
for 
\begin{equation}
	s_{\psi}:=\frac{-1}{\sqrt{V_{\psi}}} \arccos\Bigg(\frac{E_{\Psi}-\alpha}{\sqrt{V_{\psi}}+(E_{\psi}-\alpha)^2}\Bigg)~,
\end{equation}
where
\begin{equation}
E_{\psi}=\langle \psi|\boldsymbol{\hat{H}}|\psi\rangle, \;\; V_{\psi}= \langle \psi|\boldsymbol{\hat{H}}^{2}|\psi\rangle -E_{\Psi}^{2}~.
\end{equation}
\end{definition2}

The costly part of the QITE algorithm involves the estimation of the non-unitary dynamics generated by a slew of local-Hamiltonians $\boldsymbol{\hat{h}}:=\sum_{m}\alpha_{m}\boldsymbol{\hat{\sigma}}_{m}$, where the $\boldsymbol{\hat{\sigma}}_{m}$ are Pauli strings. The goal of QITE is to find a Hermitian operator $\boldsymbol{\hat{A}}$, dependent on the state $|\psi\rangle$, such that 
\begin{equation}\label{eqn:uni}
	c^{-1/2}e^{-\Delta\tau\boldsymbol{\hat{h}}}|\psi\rangle=e^{-i\Delta \tau\boldsymbol{\hat{A}}}|\psi\rangle~,
\end{equation}
for any $|\psi\rangle\in\mathbb{C}^{2}$ where the l.h.s. is the ITE step Eq.~\eqref{eq:ITE_step} with $c:= \langle \psi|e^{-2\Delta\tau\boldsymbol{\hat{h}}}|\psi\rangle$ and $\Delta\tau$ a small parameter. Smallness of $\Delta\tau$ here means that
\begin{equation}
	c^{-1/2}e^{-\Delta\tau\boldsymbol{\hat{h}}}|\psi\rangle \approx	\|(\mathbb{I}-\Delta\tau\boldsymbol{\hat{h}})|\psi\rangle\|^{-1/2}(\mathbb{I}-\Delta\tau\boldsymbol{\hat{h}})|\psi\rangle~.
\end{equation}
To find a Hermitian matrix $\boldsymbol{\hat{A}}$ satisfying Eq.~\eqref{eqn:uni} we minimize the following norm
\begin{equation}
	\||\Delta_{0}\rangle-|\Delta\rangle\|~,
\end{equation}
where 
\begin{equation}
	|\Delta_{0}\rangle  = \frac{\|\mathbb{I}-\Delta\tau\boldsymbol{\hat{h}}\|^{-1/2}(\mathbb{I}-\Delta\tau\boldsymbol{\hat{h}})|\psi\rangle-|\psi\rangle  }{\Delta\tau}~,
\end{equation}
and 
\begin{equation}
	|\Delta\rangle=  -i\boldsymbol{\hat{A}} |\psi\rangle ~.
\end{equation}
The idea of QITE is that in a small neighborhood of $\Delta\tau=0$ the state $\|(\mathbb{I}-\Delta\tau\boldsymbol{\hat{h}})|\psi\rangle\|^{-1/2}(\mathbb{I}-\Delta\tau\boldsymbol{\hat{h}})|\psi\rangle$ evolves as a one-parameter unitary dynamics over the state $|\psi\rangle$.
Let us now consider
\begin{equation}
	(\boldsymbol{\hat{H}} -\alpha\mathbb{\hat{I}})|\psi\rangle~,
\end{equation}
where 
\begin{align}
	\alpha&=-\|(\mathbb{I}-\Delta\tau\boldsymbol{\hat{h}}) |\psi\rangle\|^{-1/2} ~, \\
	\boldsymbol{\hat{H}}&=-\Delta\tau\|\mathbb{I}-\Delta\tau\boldsymbol{\hat{h}}\|^{-1/2}\boldsymbol{\hat{h}}~.
\end{align}
Noting that $V_{\Psi}=\Delta\tau^{2}\|\mathbb{I}-\Delta\tau\boldsymbol{\hat{h}}\|(\langle\psi|\boldsymbol{\hat{h}}^{2}|\psi\rangle-\langle\psi|\boldsymbol{\hat{h}}|\psi\rangle^{2})$. The latter definitions coupled with Lemma \ref{eqn:suzukilemm} lead to 
\begin{equation}
	\|(\mathbb{I}-\Delta\tau\boldsymbol{\hat{h}})|\psi\rangle\|^{-1/2}(\mathbb{I}-\Delta\tau\boldsymbol{\hat{h}})|\psi\rangle   =
e^{\beta_{\Psi}[\Psi,\boldsymbol{\hat{h}}]}|\psi\rangle~,
\end{equation}
where 
\begin{equation}
	\beta_{\Psi}:=\frac{-\Delta\tau\|\mathbb{I}-\Delta\tau\boldsymbol{\hat{h}}\|^{-1/2}}{\sqrt{V_{\Psi}}}  \nonumber 
		     \arccos\Bigg(\frac{E_{\Psi}-\alpha}{\sqrt{V_{\Psi}}+(E_{\Psi}-\alpha)^2}\Bigg)=
		     \frac{-\arccos\Bigg(\frac{E_{\Psi}-\alpha}{\sqrt{V_{\Psi}}+(E_{\Psi}-\alpha)^2}\Bigg)} {\sqrt{(\langle\psi|\boldsymbol{\hat{h}}^{2}|\psi\rangle-\langle\psi|\boldsymbol{\hat{h}}|\psi\rangle^{2})}}~.
\end{equation}
Whence, the numerical derivative $|\Delta_{0}\rangle$ is approximately the derivative of a one-parameter unitary group, in this case said parameter is a function of $\Delta\tau$ which is non-decreasing with respect to $\Delta\tau$; this has been shown in \cite{glu}. Namely, 
\begin{equation}
|\Delta_{0}\rangle\approx \partial_{\Delta\tau}e^{\beta_{\Psi}(\Delta\tau)[\Psi, \boldsymbol{\hat{h}}]}\big|_{\Delta\tau=0}|\psi\rangle=
t[\Psi, \boldsymbol{\hat{h}}]|\psi\rangle~,
\end{equation}
where $t=\partial_{\Delta\tau}\beta_{\Psi}(\Delta\tau)\big|_{\Delta\tau=0}$; the error here being $\mathcal{O}(\Delta\tau)$.
We therefore conclude that 
\begin{equation}
	t[\boldsymbol{\hat{\rho}}_{0}, \boldsymbol{\hat{h}}]=-i\Delta\tau\boldsymbol{\hat{A}}~,
\end{equation}
i.e., 
\begin{equation}
 	\big\| |\Delta_{0}\rangle-|\Delta\rangle\big\|
=\mathcal{O}(\Delta\tau)  ~,
\end{equation}
whenever $\boldsymbol{\hat{A}}=\frac{it}{\Delta\tau}[\boldsymbol{\hat{\rho}}_{0}, \boldsymbol{\hat{h}}]$. Notice that this points in the same direction as a single iteration of DB-ITE, leading us to conclude that; with $\mathcal{O}(\Delta\tau)$ error, both QITE and DB-ITE produce the one-parameter group generating the geodesic from the initial state $|\psi\rangle$ to the ground state of the Hamiltonian $\boldsymbol{\hat{h}}$ so long as the initial state has overlap with the ground state. 
Of course, a technique may be devised so that the output for $\boldsymbol{\hat{A}}$ is indeed $\frac{it}{\Delta\tau}[\boldsymbol{\hat{\rho}}_{0}, \boldsymbol{\hat{h}}]$ such as the application of Lemma \ref{eqn:suzukilemm}.

With the latter, it can be argued that, in the case of a single qubit, a single iteration of QITE yields the one-parameter group that traces out the geodesic connecting the selected initial state to the ground state. 
We will not do it here but similar techniques may be used to show that multiple QITE iterations in such a case lead to elements of the mentioned one-parameter. Leading to a QITE version of Corollary \ref{eqn:3232} presented below. Before presenting this Corollary, let us first present a lemma proven in \cite{arxivgluzagrover}.

\begin{definition2}[Equivalence of ITE and commutator flow for projector Hamiltonians \cite{arxivgluzagrover}]
\label{eqn:ITEGEO}
Let $\boldsymbol{\hat{P}}$ be an $N$ dimensional. Then, for any ITE evolution time $\tau$ , there exists a time
duration $s_{\tau}$ such that
\begin{equation}\label{eqn:iteinlemm}
	\frac{e^{\tau\boldsymbol{\hat{P}}}|\psi_{0}\rangle}{\|e^{\tau\boldsymbol{\hat{P}}}|\psi_{0}\rangle\|} = e^{s_{\tau}[\boldsymbol{\hat{P}}, \boldsymbol{\hat{\rho}}_{0}]}|\psi_{0}\rangle~,
\end{equation}
where $\frac{ds_{\tau}}{d\tau}\geq 0$, i.e., $s_{\tau}$ is non decreasing, and $\boldsymbol{\hat{\rho}}_{0}:=|\psi_{0}\rangle\langle\psi_{0}|$.
\end{definition2}
This lemma may be extended to the following more generic result via some almost trivial arguments. We present this as a corollary. 
\begin{Co}{Equivalence of ITE and commutator flow for Rank-2 Hamiltonians } 
\label{eqn:3232}
Let $\boldsymbol{\hat{H}}$ be a Rank-2 Hermitian matrix. Then, for any ITE evolution time $\tau$ , there exists a time
duration $s_{\tau}$ such that
\begin{equation}
\frac{e^{\tau\boldsymbol{\hat{H}}}|\psi_{0}\rangle}{\|e^{\tau\boldsymbol{\hat{H}}}|\psi_{0}\rangle\|} = e^{s_{\tau}[\boldsymbol{\hat{H}}, \psi_{0}]}|\psi_{0}\rangle~,
\end{equation}
where $\psi_{0}:=|\psi_{0}\rangle\langle\psi_{0}|$.
\end{Co}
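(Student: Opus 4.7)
The plan is to reduce the rank-two Hermitian case to the projector case handled by Lemma \ref{eqn:ITEGEO}, via a short algebraic decomposition of $\boldsymbol{\hat{H}}$. Let $\boldsymbol{\hat{H}} = E_{0}|E_{0}\rangle\langle E_{0}| + E_{1}|E_{1}\rangle\langle E_{1}|$ be the spectral decomposition (with $E_{0}$ labelled so that $e^{\tau\boldsymbol{\hat{H}}}$ concentrates on $|E_{0}\rangle$, i.e.\ $E_{0}>E_{1}$), and, following the one-qubit setting flagged in the main text, assume $|\psi_{0}\rangle$ lies in the two-dimensional support of $\boldsymbol{\hat{H}}$. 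Write
$$\boldsymbol{\hat{H}} = E_{1}\hat{I}_{S} + (E_{0}-E_{1})\boldsymbol{\hat{P}}, \qquad \boldsymbol{\hat{P}}:=|E_{0}\rangle\langle E_{0}|,$$
with $\hat{I}_{S}:=|E_{0}\rangle\langle E_{0}| + |E_{1}\rangle\langle E_{1}|$ the identity on the support. Since the two summands commute and $\hat{I}_{S}|\psi_{0}\rangle=|\psi_{0}\rangle$, one gets $e^{\tau\boldsymbol{\hat{H}}}|\psi_{0}\rangle = e^{\tau E_{1}}\, e^{\tau(E_{0}-E_{1})\boldsymbol{\hat{P}}}|\psi_{0}\rangle$, and the positive scalar $e^{\tau E_{1}}$ drops out upon normalization:
$$\frac{e^{\tau\boldsymbol{\hat{H}}}|\psi_{0}\rangle}{\|e^{\tau\boldsymbol{\hat{H}}}|\psi_{0}\rangle\|} = \frac{e^{\tau'\boldsymbol{\hat{P}}}|\psi_{0}\rangle}{\|e^{\tau'\boldsymbol{\hat{P}}}|\psi_{0}\rangle\|}, \qquad \tau':=\tau(E_{0}-E_{1}).$$

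I would then invoke Lemma \ref{eqn:ITEGEO} with the rank-one projector $\boldsymbol{\hat{P}}$ (the $N=1$ instance of that lemma) to extract a non-decreasing reparametrization $s_{\tau'}$ satisfying
$$\frac{e^{\tau'\boldsymbol{\hat{P}}}|\psi_{0}\rangle}{\|e^{\tau'\boldsymbol{\hat{P}}}|\psi_{0}\rangle\|} = e^{s_{\tau'}[\boldsymbol{\hat{P}},\boldsymbol{\hat{\rho}}_{0}]}|\psi_{0}\rangle,$$
and then convert the commutator back to one built from $\boldsymbol{\hat{H}}$. Because $\hat{I}_{S}$ commutes with $\boldsymbol{\hat{\rho}}_{0}=|\psi_{0}\rangle\langle\psi_{0}|$ under the support hypothesis, $[\boldsymbol{\hat{H}},\boldsymbol{\hat{\rho}}_{0}] = (E_{0}-E_{1})\,[\boldsymbol{\hat{P}},\boldsymbol{\hat{\rho}}_{0}]$, so setting $s_{\tau}:=s_{\tau'}/(E_{0}-E_{1})$ yields the stated identity. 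Monotonicity $ds_{\tau}/d\tau\geq 0$ follows by the chain rule from the monotonicity of $s_{\tau'}$ in Lemma \ref{eqn:ITEGEO}, the factor $(E_{0}-E_{1})$ from $d\tau'/d\tau$ cancelling the $1/(E_{0}-E_{1})$ in the definition of $s_{\tau}$.

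The main obstacle is reading the statement literally when the ambient Hilbert space has dimension strictly larger than two and $|\psi_{0}\rangle$ has a component in $\ker\boldsymbol{\hat{H}}$. There $\hat{I}_{S}|\psi_{0}\rangle\neq|\psi_{0}\rangle$, $e^{\tau E_{1}\hat{I}_{S}}$ no longer reduces to a scalar, and the clean cancellation above breaks. A direct expansion shows that the ITE vector then lives in the three-dimensional subspace $\mathrm{span}(|\psi_{0}\rangle,|E_{0}\rangle,|E_{1}\rangle)$, whereas the commutator orbit $e^{s[\boldsymbol{\hat{H}},\boldsymbol{\hat{\rho}}_{0}]}|\psi_{0}\rangle$ is confined to the two-dimensional plane $\mathrm{span}(|\psi_{0}\rangle,\boldsymbol{\hat{H}}|\psi_{0}\rangle)$, so the two sides cannot coincide as vectors in general. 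I would therefore append the natural hypothesis $|\psi_{0}\rangle\in\mathrm{supp}(\boldsymbol{\hat{H}})$, which is automatic in the one-qubit case explicitly motivating the corollary, and complete the proof in that regime.
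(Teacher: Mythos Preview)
Your argument is correct and follows essentially the same route as the paper: decompose $\boldsymbol{\hat{H}}$ as a scalar multiple of the identity plus a scalar multiple of a projector, drop the identity piece from both the normalized ITE quotient and the commutator, and invoke Lemma~\ref{eqn:ITEGEO} on the remaining projector. The paper writes $\boldsymbol{\hat{H}}=\Delta\boldsymbol{\hat{P}}_{1}+E_{0}\mathbb{I}$ with the full identity and simply notes that $E_{0}\mathbb{I}$ commutes with everything; your version with $\hat{I}_{S}$, the explicit rescaling $s_{\tau}=s_{\tau'}/(E_{0}-E_{1})$, and the careful discussion of the support hypothesis is more detailed but not a different strategy.
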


\section{Upper and lower bounds on ITE convergence time}\label{sec:app_Length}


Let $\ket{\psi_0}$ be a $N$-qubit state in a $d=2^N$-dimensional Hilbert space and $\boldsymbol{\hat{H}}$ a Hamiltonian. One can express $\ket{\psi_0}$ in terms of the eigenbasis of  $\boldsymbol{\hat{H}}$ as
\begin{equation}
     \ket{\psi_0} = \sum_{i=0}^{d-1} c_i \ket{E_i}~,
\end{equation}
where $\ket{E_i}$ is the $i$-th element of the eigenbasis, and $c_i = \braket{E_i|\psi_0}$ are their corresponding amplitudes. For simplicity, the groundstate $\ket{E_0}$ is considered to be non-degenerated\footnote{Any possible degeneracies of higher energy levels are not distinguished with a different index.}.
Building on Section \ref{subsec:ITE}, the evolution in imaginary time of $\ket{\psi_0}$ expressed in the eigenbasis of $\boldsymbol{\hat H}$ is expressed as
\begin{equation}
    \ket{\psi(\tau)} = \frac{e^{-\boldsymbol{\hat{H}} \tau} \ket{\psi_0}}{\left\| e^{-\boldsymbol{\hat{H}} \tau} \ket{\psi_0} \right\|} = \frac{\sum_{i=0}^{d-1} c_i e^{-E_i \tau} \ket{E_i}}{\left( \sum_{i=0}^{d-1} |c_i|^2 e^{-2 E_i \tau} \right)^{1/2}}~.
\end{equation}
To bound the time of convergence to the ground state one may compute the fidelity at a given $\tau$ by taking the overlap squared with respect to $\ket{E_0}$
\begin{equation}\label{eq:F_ite}
    F(\tau) = \left| \braket{E_0 | \psi(\tau)} \right|^2 = \frac{|c_0|^2 e^{-2 E_0 \tau}}{\sum_{i=0}^{d-1} |c_i|^2 e^{-2 E_i \tau}}= \frac{|c_0|^2}{|c_0|^2 + \sum\limits_{i=1}^{d-1} |c_i|^2 e^{-2 (E_i - E_0) \tau}}~,
\end{equation}
where the fact that $\braket{E_i|E_j}= \delta_{i,j}$ has been used.
Given this expression, and using the fact that, by construction, $E_0 < E_1 \leq ... \leq E_{d-1}$, then
\begin{equation}
    \frac{|c_0|^2}{|c_0|^2 + \sum\limits_{i=1}^{d-1} |c_i|^2 e^{-2 (E_1 - E_0) \tau}}
    \le F(\tau) =
    \frac{|c_0|^2}{|c_0|^2 + \sum\limits_{i=1}^{d-1} |c_i|^2 e^{-2 (E_i - E_0) \tau}}
    \le
    \frac{|c_0|^2}{|c_0|^2 + \sum\limits_{i=1}^{d-1} |c_i|^2 e^{-2 (E_{d-1} - E_0) \tau}}~,
\end{equation}
which, by imposing normalization to the elements of the sum, yields
\begin{equation}\label{eq:sandwich}
    \frac{|c_0|^2}{|c_0|^2 + (1 - |c_0|^2)\, e^{-2 (E_1 - E_0)\tau}}
    \le
    F(\tau)
    \le
    \frac{|c_0|^2}{|c_0|^2 + (1 - |c_0|^2)\, e^{-2 (E_{d-1} - E_0)\tau}}~.
\end{equation}
Since we aim to fix the final and initial fidelities for our derivation, let us denote $F(\tau) = F_{\mathrm{conv}}$ and $|c_0|^2 = F_0$ for convenience. Solving the bounds in Eq.~(\ref{eq:sandwich}) for $\tau$ it follows that 
\begin{equation}
    \frac{1}{2(E_{d-1} - E_0)} \left[
        \ln\left( \frac{1 -  F_0}{ F_0} \right)
        + \ln\left( \frac{F_{\mathrm{conv}}}{1 - F_{\mathrm{conv}}} \right)
    \right]
    \le
    \tau_{\mathrm{conv}}
    \le
    \frac{1}{2(E_1 - E_0)} \left[
        \ln\left( \frac{1 -  F_0}{ F_0} \right)
        + \ln\left( \frac{F_{\mathrm{conv}}}{1 - F_{\mathrm{conv}}} \right)
    \right].
\end{equation}
These bounds depend explicitly on the smallest and largest spectral gaps of the system, $E_1 - E_0$ and $E_{d-1} - E_0$ respectively, as well as on the initial and target fidelities, $ F_0$ and $F_{\mathrm{conv}}$, both defined with respect to the ground state. Notably, the dependence on these fidelities appears in the form of logarithmic ratios: the initial fidelity enters inversely as $\ln[(1 -  F_0)/ F_0]$, diverging as $ F_0 \rightarrow 0$ (i.e. initial states that are orthogonal to the ground state never converge), while the target fidelity contributes via $\ln[F_{\mathrm{conv}} / (1 - F_{\mathrm{conv}})]$, which diverges as $F_{\mathrm{conv}} \rightarrow 1$ (i.e. ITE never reaches maximum fidelity, although it approaches the target exponentially).

An important observation arises in the special case of Hamiltonians with spectral cardinaliy 2, where $E_1 = E_2 = \dots = E_{d-1}$. In this scenario, all excited states are energetically degenerate, and thus the upper and lower spectral gaps coincide: $E_1 - E_0 = E_{d-1} - E_0$. As a result, both the left and right-hand sides of the inequality collapse to the same expression. Invoking the squeeze theorem, this implies that the convergence time is exactly determined
\begin{equation}
    \tau_{\mathrm{conv}}(\boldsymbol{\hat{H}}_\text{Spec-2}) =
    \frac{1}{2(E_1 - E_0)} \left[
        \ln\left( \frac{1 -  F_0}{ F_0} \right)
        + \ln\left( \frac{F_{\mathrm{conv}}}{1 - F_{\mathrm{conv}}} \right)
    \right]~.
\end{equation}
From a practical perspective, in the absence of prior knowledge of the ground state or any symmetry of the system, $F_0$ is the fidelity between two random states within the Hilbert space. Consequently, it acts as a random variable whose probability density function (PDF) is well known (see \cite{Haar_pairwise_fidelity})
\begin{equation}
    P(F) = (2^N-1)(1-F)^{2^N-2}~.
\end{equation}
Such PDF corresponds to a Beta distribution with $\alpha=1, \beta = 2^{N-1}$ \cite{beta_distribution}, and can be leveraged to compute the expectation value of our bounds for the convergence time
\begin{align}
\mathbb{E}( \tau_{\mathrm{conv}})
&\le \int_0^1 \frac{1}{2(E_1 - E_0)}
\left[
\ln \left(  \frac{1 -  F_0}{ F_0} \right)
+ \ln\left( \frac{F_{\mathrm{conv}}}{1 - F_{\mathrm{conv}}} \right)
\right] P(F_0)\, dF_0 \\[4pt]
&= \frac{1}{2(E_1 - E_0)}
\left[
\int_0^1 \ln \left(  \frac{1 -  F_0}{ F_0} \right) P(F_0)\, dF_0
+ \ln\left( \frac{F_{\mathrm{conv}}}{1 - F_{\mathrm{conv}}} \right)
\underbrace{\int_0^1 P(F_0)\, dF_0}_{=\,1}
\right] \\[6pt]
&= \frac{1}{2(E_1 - E_0)}
\Bigg[
\int_0^1 \ln(1-F_0)\,P(F_0)\,dF_0
-\int_0^1 \ln(F_0)\,P(F_0)\,dF_0
+ \ln\left( \frac{F_{\mathrm{conv}}}{1 - F_{\mathrm{conv}}} \right)
\Bigg] \\[8pt]
&= \frac{1}{2(E_1 - E_0)}
\Bigg[
\underbrace{\mathbb{E}_{\mathrm{Beta}(\alpha,\,\beta)}[\ln(1-F_0)]}_{=\ \psi(\beta)-\psi(\alpha+\beta)}
-
\underbrace{\mathbb{E}_{\mathrm{Beta}(\alpha,\,\beta)}[\ln(F_0)]}_{=\ \psi(\alpha)-\psi(\alpha+\beta)}
+ \ln\left( \frac{F_{\mathrm{conv}}}{1 - F_{\mathrm{conv}}} \right)
\Bigg]_{\ \alpha=1,\ \beta=2^N-1} \\[8pt]
&= \frac{1}{2(E_1 - E_0)}
\Bigg[
\big(\psi(\beta)-\psi(\alpha+\beta)\big)
-\big(\psi(\alpha)-\psi(\alpha+\beta)\big)
+ \ln\left( \frac{F_{\mathrm{conv}}}{1 - F_{\mathrm{conv}}} \right)
\Bigg]_{\ \alpha=1,\ \beta=2^N-1} \\[8pt]
&= \frac{1}{2(E_1 - E_0)}
\left[
\psi(2^N-1)-\psi(1)
+ \ln\left( \frac{F_{\mathrm{conv}}}{1 - F_{\mathrm{conv}}} \right)
\right]~,
\end{align}
where we exploited the normalization of $P(F_0)$ and the standard result for the logarithm of the geometric mean of the Beta distribution in terms of the digamma function $\psi(z) \equiv \frac{d}{dz}\ln\Gamma(z)
= \frac{\Gamma'(z)}{\Gamma(z)}$  \cite{beta_distribution}  to solve the different integral terms. 

Now, using the asymptotic expansion of the digamma function for large argument,
\begin{equation}
\psi(z) = \ln z+O\!\left(z^{-1}\right)~,\qquad z\to\infty~,
\end{equation}
we obtain
\begin{align}
\psi(2^N-1)
&= \ln(2^N-1)+O\!\left((2^N-1)^{-1}\right) \nonumber\\
&= N\ln 2+O\!\left(2^{-N}\right)~, \qquad N\to\infty~.
\end{align}
Substituting this into the bound for the expectation value yields
\begin{equation}
\mathbb{E}(\tau_{\mathrm{conv}})
\le
\frac{1}{2(E_1-E_0)}
\left[
N\ln 2 + O\!\left(2^{-N}\right)
-\psi(1)
+\ln\!\left(\frac{F_{\mathrm{conv}}}{1-F_{\mathrm{conv}}}\right)
\right]~,
\qquad N\to\infty~.
\end{equation}
Since $\psi(1)=-\gamma \approx  -0.577$ and $\ln\!\big(\tfrac{F_{\mathrm{conv}}}{1-F_{\mathrm{conv}}}\big)$ are constants\footnote{Note that $\gamma$ is the Euler-Mascheroni constant \cite{beta_distribution}.} with respect to $N$, while the remainder term decays exponentially as $O(2^{-N})$, the leading contribution is linear in $N$. Therefore, we get the asymptotic upper bound
\begin{equation}
    \mathbb{E}(\tau_{\mathrm{conv}})\in O\!\left(\frac{N}{E_1-E_0}\right)~,
\end{equation}
and, analogously from the lower bound
\begin{equation}
    \mathbb{E}(\tau_{\mathrm{conv}}) \in \Omega\left( \frac{N}{E_{d-1} - E_0} \right)~.
\end{equation}
These results establish that the average convergence time required to reach a fixed target fidelity from a random initial state is not only constrained by the spectral properties of the Hamiltonian, but also scales with the dimensionality of the system.

\section{Transpilation and Gate Counts}
\label{sec:app_transpilation}
In this section, we explain how we transpile the QITE and ACQ unitaries and show that, for a single step of the evolution, the gate cost of QITE and ACQ is the same. When combined with the fact that ACQ reduces the total number of evolution steps, this implies that ACQ achieves a substantially reduced overall circuit depth.

For a single time step of QITE, we transpile the unitary in \eqref{eq:QITEunitary}, whereas for a single time step of ACQ we transpile  a unitary of the form \eqref{eq:CompressedU}, with the time parameter $\tau=\Delta \tau_n$ determined by the algorithm. In both cases, we ultimately need to transpile unitaries of the form
\begin{equation}\label{eq:unitaryEvo}
    \boldsymbol{\hat U} = e^{-i t \boldsymbol{\hat A}}~,
\end{equation}
where $\boldsymbol{\hat A}=\sum_{I} a_I \boldsymbol{\hat \sigma}_I$ is a Hermitian operator expressed as a sum of Pauli terms of locality at most $D$, and $t$ is the time step parameter. 

Exact synthesis of a general unitary operator incurs in an exponential scaling in the number of gates \cite{KG,QSD,ZXZ}. 
For instance, for a $D$ local unitary the cost in gates is $O(4^D)$.  This overhead is typically overcome by trading accuracy for a reduced gate count. Specifically, a first-order Trotter decomposition with a single repetition is performed, decomposing \eqref{eq:unitaryEvo} into a product of individual Pauli exponentials of the form
\begin{equation}\label{eq:pauliunitary}
    \boldsymbol{\hat U} \approx \prod_I e^{-it a_I \boldsymbol{\hat \sigma}_I}~.
\end{equation}
Each of these terms can then be transpiled exactly using single-qubit rotations \texttt{u3} and CNOT gates $\texttt{cx}$ \cite{PauliEvo}. The gate count for a single term scales as $O(L)$, where $L\leq D$ is the locality of the Pauli operator $\boldsymbol{\hat \sigma}_I$. Consequently, the total gate count for transpiling \eqref{eq:unitaryEvo} using this method scales as $O(N_PD)$, up to further reductions from optimization routines, as discussed in \cite{pauliunitary}. Here $N_P$ is the number of Pauli terms in the product of \eqref{eq:pauliunitary}.
More precisely, the original QITE work \cite{Motta2019DeterminingEA} shows that the number of Pauli terms involved in a time step of QITE is $N_P\sim4^D/2$. As a result, this decomposition scales as $O(N_K 4^D D/2)$ where $N_K$ is the number of Hamiltonian pieces in the decomposition of \eqref{eq:HamPieces}.

By applying this method to all the $A_{n,k}=\sum_I a_{k,I}(n) \sigma_{n,I}$ of a step $n$, the transpiled ACQ unitary would become
\begin{equation}\label{eq:transpiledVn}
    V_n(t) \approx \prod_{k=1}^{N_K}\prod_{I} e^{-i a_{k,I} (n) \sigma_{k,I} t}~,
\end{equation}
where $V_n(t)$ remains a 1-parameter unitary and the same ACQ algorithm where energy is probed at different times can still be applied. Performing the evolution with this unitary would yield different dynamics that the original one, but as seen in Fig.~\ref{fig:transpilation}, the error introduced by these extra Trotterizations only amounts to a small relative error in the fidelity reached by the quantum circuit.

\begin{figure}
    \centering
    \includegraphics[width=\linewidth]{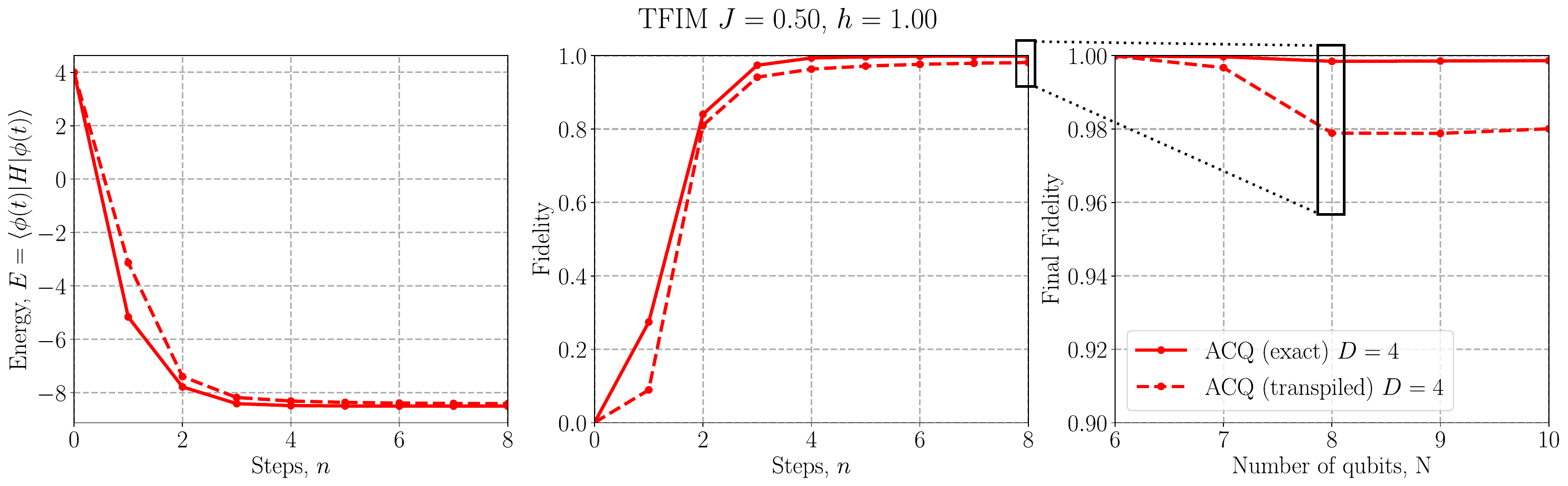}
    \caption{Comparison of the evolution of the energy and fidelity (in the left and central panel respectively) between the exact ACQ evolution, using the unitary in Eq.~\eqref{eq:CompressedU}, and the one generated by the transpiled circuit, using the unitary in Eq.~\eqref{eq:transpiledVn}. We employed the TFIM in the disordered phase to make this comparison for a system of $N=8$ qubits. On the right panel, we show how the final fidelity reached by ACQ and its transpiled version compares for an increasing number of qubits. Notice the change in scale of the fidelity in the right panel.}
    \label{fig:transpilation}
\end{figure}

To achieve what we discussed above and produce the plots in Fig. \ref{fig:comp_DBF} and tables in \ref{tab:resourcecost_ACQDB} we use Qrisp library \cite{seidel2024qrisp}. To build the unitary \eqref{eq:transpiledVn} we use Qrisp's trotterization function on \eqref{eq:unitaryEvo}. For DB-QITE, in both the Heisenberg model and TFIM, we use the exact same trotterization scheme used in \cite{glu}; however for all ACQ transpilations we use a single step of order one trotterization with no time step rescaling.

For the HVA state in the Heisenberg model, we again use the exact same construction as in \cite{glu}. For the HVA state in the TFIM we employ a very similar construction, where we use an ansatz composed of three block-commuting pieces of the Hamiltonian. Specifically, the total TFIM Hamiltonian \eqref{eq:TFIM} is partitioned into even nearest-neighbor $ZZ$ interactions, odd nearest-neighbor $ZZ$ interactions, and the transverse magnetic field $X$ terms:
\begin{equation}
    \boldsymbol{\hat{H}_0} = J \sum_{j \text{ even}} \boldsymbol{\hat{\sigma}}^z_j \boldsymbol{\hat{\sigma}}^z_{j+1}~, \quad
    \boldsymbol{\hat{H}_1} = J \sum_{j \text{ odd}} \boldsymbol{\hat{\sigma}}^z_j \boldsymbol{\hat{\sigma}}^z_{j+1}~, \quad
    \boldsymbol{\hat{H}_2} = h \sum_{j} \boldsymbol{\hat{\sigma}}^x_j~.
\end{equation}

Because the Pauli operators within each individual sub-Hamiltonian mutually commute, their corresponding evolution operators can be compiled into a quantum circuit exactly. The HVA is then constructed by sequentially applying parametrized exponential layers of these sub-Hamiltonians to the initial state $|0\rangle^{\otimes n}$:
\begin{equation}
    |\psi(\boldsymbol{\theta})\rangle = \prod_{d=1}^{p} \left( e^{-i \theta_{d, 2} \boldsymbol{\hat{H}_2}} e^{-i \theta_{d, 1} \boldsymbol{\hat{H}_1}} e^{-i \theta_{d, 0} \boldsymbol{\hat{H}_0}} \right) |0\rangle^{\otimes n},
\end{equation}

The main takeaway is that both QITE and ACQ can be transpiled with a gate-count scaling that is exponential only in the locality parameter $D$ and linear with the number of steps.
When combined with ACQ’s substantial reduction in the number of required steps, this makes ACQ a particularly promising state-preparation method for current and near-term quantum devices.

%
%
%
%

\bibliography{biblio}
\bibliographystyle{unsrt}

\end{document}